\newcolumntype{P}[1]{>{\centering\arraybackslash}p{#1}}
\theoremstyle{plain}
\theoremstyle{definition}
\theoremstyle{plain}
\newtheorem{theorem}{Theorem} 
\theoremstyle{plain}
\newtheorem{lemma}[theorem]{Lemma}
\newtheorem{proposition}[theorem]{Proposition}
\theoremstyle{definition}
\newtheorem{example}{Example}
\DeclareMathOperator{\diag}{diag}
\newcommand\summaryname{Abstract}
{\small\begin{center}%
		\bfseries{\summaryname} \end{center}}
\DeclareMathOperator{\tr}{Tr}
\title[Construction of Exact Solutions]{Construction of Exact Solutions to Nahm's Equations For the Multimonopole}
\author{H.W. Braden}
\address{
School of Mathematics and Maxwell Institute for Mathematical Sciences\\ The University of Edinburgh\\ 
Edinburgh EH9 3FD, Scotland, U.K.
}
\email{hwb@ed.ac.uk}
\author{Sergey A. Cherkis}
\address{
Department of Mathematics\\ University of Arizona\\
Tucson AZ 85721-0089, USA
}
\email{cherkis@math.arizona.edu}
\author{Jason M.~ Quinones}
\address{School of STAMP\\ Gallaudet University\\ Washington DC 20002, USA}
\email{jason.quinones@gallaudet.edu}
\begin{document}

\begin{abstract} We construct high rank solutions to Nahm's equations for boundary conditions that correspond to the Dirac multimonopole. Here, the spectral curve is explicitly known and we achieve the integration by constructing a basis of polynomial tuples  that forms a frame for the flow of the eigenline bundle over the curve.
\end{abstract}

\maketitle

\section{Introduction}
Nahm's equations, discovered in the context of monopoles in gauge theory \cite{nahm_simple_1980}, arise naturally in hyperk\"ahler geometry
and representation theory and have become an area of independent interest. These equations 
are
\begin{align} \label{eq:NahmT0}
\begin{split} \frac{dT_1}{ds}+[T_0,T_1]&=[T_2,T_3], \\
\frac{dT_2}{ds}+[T_0,T_2]&=[T_3,T_1], \\
\frac{dT_3}{ds}+[T_0,T_3]&=[T_1,T_2], \\
\end{split}
\end{align}
where $T_0(s), T_1(s), T_2(s), T_3(s)$ are antihermitian $n\times n$ matrix-valued functions and the real parameter $s$ lies in an interval, or union of intervals, depending on the problem. The boundary condition for the $T_*$'s
also depend critically on the problem. Nahm's equations form an integrable system, a system of nonlinear ordinary differential equations with sufficiently many conserved quantities to be solved by means of algebraic geometry,
and integrable systems techniques (such as the Baker-Akhiezer function) may be applied to their
solution.

In this paper we construct high rank solutions to Nahm's equations for the interval $(0,\infty)$ 
subject to the boundary conditions 
\begin{align}
\label{eq:NahmBddConditionNoT0}
\begin{split}
\underset{s \to 0}{\lim}& s T_0(s)=0, \quad
\lim_{s \to 0} sT_j(s)=\frac{i\sigma_j}{2},\ j=1,2,3,
\\
\lim_{s \to \infty}& (T_0(s),T_1(s),T_2(s),T_3(s)) \in \text{ad}_{U(n)} (0,i\tau_1,i\tau_2,i\tau_3).
\end{split}
\end{align}
Here, the triplet $\sigma_j$  is a specified $n$-dimensional irreducible representation of $\mathfrak{su}(2)$, $[\sigma_i,\sigma_j]=2i\epsilon_{ijk}\sigma_k$, and
by a constant gauge transformation we take this to be $\sigma_3=\text{diag}(n-1,n-3,\dots,-n+1)$ and $\sigma_+:=\frac{\sigma_1+i\sigma_2}{2}=[\sqrt{j(n-j)}\delta_{j,j-1}].$ The chosen triplet $(\tau_1,\tau_2,\tau_3)$ is regular in  $U(n)$; that  is the stabilizer of $\tau_1, \tau_2$, and $\tau_3$ is a maximal torus. This problem was related to the nonabelian Toda equation in \cite[\S4]{mikhaylov_12} where it was studied in terms of the 
Baker-Akhiezer functions.

The boundary conditions just described correspond to the Dirac $U(1)$ multimonopole \cite{nahm_abelian_1980}. 
As with a great number of other integrable systems, including the Korteweg-de Vries equation and the nonlinear Schr\"odinger equations, Nahm's equations arise from a symmetry reduction of the anti-self-dual equation (ASD)
\begin{align} \label{eq:ASDequation}
\ast F_A = - F_A,
\end{align}
for a principal $G$-bundle $E$ and connection $A$ over a 4-dimensional manifold.
The symmetry reduction of ASD \eqref{eq:ASDequation} to Nahm's \eqref{eq:NahmT0} is accomplished by taking $\Gamma=\mathbb{R}^3$ to be the closed subgroup of $\mathbb{R}^4$, the group of translations, and requiring the bundle and connection on $\mathbb{R}^4$ to be invariant under $\Gamma$. This gives us a (reduced) self-dual pair $(A,E)$ over $X=\mathbb{R}^4 / \Gamma$ that satisfies Nahm's equations. If one takes the dual space $X^\ast=\mathbb{R}^4 / \Gamma^\ast$, where $\Gamma^\ast$ is isomorphic to $\mathbb{R}$, the reduced self-dual pair $(\hat{A},\hat{E})$ is the monopole. This correspondence between equations on
$X$ and $X\sp\ast$ is known as the Nahm transform,
an operation here that transforms solutions of Nahm's equations into monopoles, and vice versa \cite{nahm_simple_1980} \cite{nahm_algebraic_1983}.
As we remarked earlier, the monopole that corresponds to the solution of Nahm's equations with the boundary conditions above is the Dirac $U(1)$ multimonopole \cite{nahm_abelian_1980}.
Much study has been devoted to the Nahm transform and the 4-manifolds $X$ for which this exists
\cite{corrigan_goddard_84, nakajima_93, jardim_04, charbonneau_hurtubise_19}. Recently, these acquired new significance as an essential ingredient of Witten's approach to the categorification of knot invariants \cite{witten_knots_12,witten_12b,gaiotto_witten_12,mazzeo_witten_14,mazzeo_witten_20,taubes_20}.

In this paper we solve Nahm's equations by constructing an orthonormal basis of polynomials that satisfy a matching condition, following the algebro-geometric method of integrating Nahm's equations. This method uses the Lax representation with spectral parameter $\zeta \in \mathbb{P}^1$ for matrix-valued functions $L$ and $M$ such that Nahm's equations are equivalent to
\begin{align}\label{lax}
\frac{d}{ds}L(s,\zeta) =[L(s,\zeta),M(s,\zeta)].
\end{align}
The Lax equation implies that the spectrum of $L(s,\zeta)$ is independent of the variable $s$ (as $\tr L^k$ are the conserved quantities of this system thanks to (\ref{lax})). This produces an algebraic curve $\mathcal{C}$ in $T\mathbb{P}^1$ defined by
\begin{align}
\det(\eta \mathds{1}-L(s,\zeta))=0,
\end{align}
for $\eta \frac{\partial}{\partial \zeta} \in T\mathbb{P}^1$. One then defines for generic $L(s,\zeta)$ a line bundle $L^s(n-1)$ over $\mathcal{C}$ with fiber at $(\zeta,\eta)$ given by the $\eta$-eigenspace of $L(s,\zeta)$. The flow of $L(s,\zeta)$ as $s$ evolves then produces a linear flow $L^s(n-1)$ in the Jacobian $\text{Jac}(\mathcal{C})$ of the spectral curve \cite{griffiths_linearizing_1985}, or, as we shall describe for this problem, the generalized Jacobian of $\mathcal{C}$. The integration concludes with writing a solution for Nahm's equations in terms of an appropriate frame of the linear flow. 

The task is then to construct an orthogonal frame of $H\sp0(L^s(n-1))$ for $s \in (0,\infty)$. For our boundary conditions we may express this frame in terms of polynomial tuples satisfying conditions we will prescribe; we give two equivalent systems of linear equations for the construction of these polynomials.
The polynomial tuples have significance beyond the construction of Nahm solutions. The Nahm transform between monopoles and Nahm solutions requires solving for normalizable zero modes to the associated Dirac operators. Once one obtains the orthogonal basis of polynomial tuples, it is a straightforward operation to construct all normalizable zero modes to the Nahm-sided Dirac operator following \cite{braden_construction_2018} and to the monopole-sided Dirac operator following \cite{lamy-poirier_dirac_2015}.

\section{Analytical Properties of Nahm Solutions} In this section, we discuss some analytical properties of a solution to Nahm's equation we will need.

First observe that 
Nahm's equations \eqref{eq:NahmT0} and boundary conditions \eqref{eq:NahmBddConditionNoT0} are invariant under
the gauge group $\mathcal{G}$,
\begin{align*}
	\mathcal{G}=\{g:[0,\infty) \to U(n) \mid g(0)=1,\ \underset{s \to \infty}{\lim}  g \in \text{Diag}(U(n)),\ g^{-1}(s)\dot{g}(s) \in L^2[0,\infty) \},
\end{align*}
where the action of $\mathcal{G}$ on $(T_0,T_1,T_2,T_3)$  is given by
\begin{align} 
T_0& \to gT_0g^{-1}-\dot{g}g^{-1}, \quad
T_j \to gT_jg^{-1}, \ \  j=1,2,3.
\end{align}
Using this gauge invariance we may transform $T_0$ to be identically zero ($T_0=0$) and deal with
the simpler form of Nahm's equations,
\begin{equation} \label{eq:NahmNoT0}
\frac{dT_1}{ds}=[T_2,T_3], \quad
\frac{dT_2}{ds}=[T_3,T_1], \quad
\frac{dT_3}{ds}=[T_1,T_2],
\end{equation}
under the same boundary conditions as \eqref{eq:NahmBddConditionNoT0}. 

Next, using the two lemmas from the literature below, we may
write the solution to \eqref{eq:NahmNoT0} satisfying
\eqref{eq:NahmBddConditionNoT0} in the form
\begin{align} \label{eq:AnalyticConditionsNahmData}
T_j(s)=\frac{i\sigma_j}{2s}+g_0i\tau_jg_0^{-1}+b_j(s),
\end{align}
where $g_0 \in U(n)$ and\footnote{We use the Frobenius norm $|b_j|=(\text{Tr}\bar{b}_j^T b_j)^{1/2}$, and we say $b_j(s) \in L^2[0,\infty)$ if $\int_0^\infty |b_j|^2 ds < \infty$.}
$b_j(s) \in L^2[0,\infty)$, 
$\underset{s \to \infty}{\lim} b_j(s)=0$. 
First, the behavior near the pole at $s=0$ is addressed by the following lemma.
\begin{lemma}[{\cite[Lemmas 5\&7]{cherkis_instantonsTaubNUT}}] Suppose the Nahm solution $(T_1(s),T_2(s),T_3(s))$ satisfies the following condition.
\begin{align} 
T_j - \frac{i \sigma_j}{2s} \in L^2([0,\epsilon)),
\end{align}
for some $\epsilon >0$.
Then
\begin{align}
T_j-\frac{i \sigma_j}{2s} \in L^\infty([0,\epsilon]).
\end{align}
\end{lemma}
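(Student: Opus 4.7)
The plan is to set $R_j(s) := T_j(s) - \frac{i\sigma_j}{2s}$ and analyze Nahm's equations \eqref{eq:NahmNoT0} as a singular ODE for $R=(R_1,R_2,R_3)$ with a regular singular point at $s=0$. Substituting the ansatz into $\dot T_j = \tfrac{1}{2}\epsilon_{jkl}[T_k,T_l]$ and using $[\sigma_k,\sigma_l]=2i\epsilon_{klm}\sigma_m$, the $s^{-2}$ terms cancel identically, leaving
\begin{equation*}
\dot R_j = \frac{1}{s}\,\mathcal{L}(R)_j + \mathcal{N}(R)_j, \qquad \mathcal{L}(R)_j := \frac{i}{2}\epsilon_{jkl}[\sigma_k,R_l],
\end{equation*}
with $\mathcal{N}(R)_j = \tfrac{1}{2}\epsilon_{jkl}[R_k,R_l]$ quadratic in $R$.

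The linear operator $\mathcal{L}$ is $\mathfrak{su}(2)$-equivariant on $\mathfrak{gl}(n,\mathbb{C})\otimes\mathbb{C}^3$, where $\mathfrak{su}(2)$ acts on the first factor via $J_k:=\tfrac{1}{2}\mathrm{ad}(\sigma_k)$ and on the second in the spin-$1$ representation $\tilde J_k$; one recognizes $\mathcal{L}=\vec J\cdot\vec{\tilde J}$. Decomposing $\mathfrak{gl}(n)=\bigoplus_{j=0}^{n-1}V_j$ and using the Clebsch-Gordan splitting $V_j\otimes\mathbb{C}^3=V_{j+1}\oplus V_j\oplus V_{j-1}$, the Casimir identity $2\vec J\cdot\vec{\tilde J}=J_{\mathrm{tot}}^2-J^2-\tilde J^2$ shows that $\mathcal{L}$ acts on these three summands by the integer eigenvalues $j$, $-1$, and $-(j+1)$ respectively.

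With this spectral picture, Frobenius theory for the regular singular point at $s=0$ asserts that any solution of the linearized system $s\dot R=\mathcal{L}(R)$ is a sum of modes proportional to $s^\lambda$ (possibly with $\log s$ corrections when eigenvalues coincide), where $\lambda$ ranges over the spectrum of $\mathcal{L}$. The hypothesis $R\in L^2([0,\epsilon))$ forces $\int_0^\epsilon s^{2\lambda}\,ds<\infty$, i.e.\ $\lambda>-\tfrac{1}{2}$. Since every eigenvalue of $\mathcal{L}$ is an integer, this forces $\lambda\ge 0$, eliminating both the $\lambda=-1$ and $\lambda=-(j+1)$ modes; the surviving modes are bounded on $[0,\epsilon]$.

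The principal obstacle is upgrading this linearized argument to the full nonlinear system, since a priori $\mathcal{N}(R)$ could couple modes and regenerate the forbidden negative eigenvalues. I would rewrite the ODE via variation of parameters as the integral equation
\begin{equation*}
R(s) = \left(\tfrac{s}{s_0}\right)^{\mathcal{L}}R(s_0) + \int_{s_0}^{s}\left(\tfrac{s}{t}\right)^{\mathcal{L}}\mathcal{N}(R(t))\,dt
\end{equation*}
and argue by a bootstrap on the $\mathcal{L}$-eigenspace decomposition of $R$: the $L^2$ hypothesis together with Cauchy-Schwarz yields $\mathcal{N}(R)\in L^1$ near $0$, so the projection of $R$ onto any eigenspace with $\lambda\le -1$ satisfies a scalar equation $s\dot u=\lambda u+f$ with $f\in L^1$ whose only $L^2$ solution is $u\equiv 0$. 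What remains lies in the span of the eigenspaces with $\lambda\ge 0$, which is precisely the space of bounded leading behaviors, giving $R\in L^\infty([0,\epsilon])$ as claimed.
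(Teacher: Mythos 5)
The paper itself offers no proof of this lemma: it is imported verbatim from Lemmas 5 and 7 of the cited Taub--NUT reference, so your proposal can only be judged on its own merits and against that source. Your strategy is the standard one and the computational core checks out: writing $R_j=T_j-\tfrac{i\sigma_j}{2s}$ does cancel the $s^{-2}$ terms, the linearization is correctly identified as $\mathcal{L}=\vec J\cdot\vec{\tilde J}$, and the Clebsch--Gordan/Casimir computation giving eigenvalues $j$, $-1$, $-(j+1)$ on $V_j\otimes\mathbb{C}^3$ (hence an integer spectrum with no eigenvalue in $(-1,0)$) is correct; so is the observation that $R\in L^2$ makes $\mathcal{N}(R)\in L^1$ and the Duhamel formula you write down. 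Since $\mathcal{L}$ is diagonalizable, the $\log s$ caveat is moot.

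The one genuine misstep is the last one: the claim that for $\lambda\le-1$ the projected equation ``$s\dot u=\lambda u+f$, $f\in L^1$'' has $u\equiv0$ as its only $L^2$ solution, and hence that $R$ ends up lying in the span of the nonnegative eigenspaces. That is false: the quadratic source generically excites the negative modes, and they do not vanish. The projected equation is $\dot u=\tfrac{\lambda}{s}u+f$ with $f\in L^1$, whose general solution is $u(s)=s^{\lambda}\bigl(C+\int_{s_0}^{s}t^{-\lambda}f(t)\,dt\bigr)$; for $\lambda\le-1$ the factor $t^{-\lambda}$ is bounded near $0$, so the integral converges as $s\to0$, and the $L^2$ hypothesis only forces the combined constant to vanish, leaving $u(s)=s^{\lambda}\int_{0}^{s}t^{-\lambda}f(t)\,dt$, which obeys $|u(s)|\le\int_{0}^{s}|f|\to0$ --- bounded (indeed $o(1)$) but not identically zero. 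Fortunately boundedness is all the lemma requires, and the same elementary estimate (using $t^{-\lambda}\le s^{-\lambda}$ on $[s,s_0]$) bounds the forced part of the $\lambda\ge0$ modes, while their homogeneous parts $Cs^{\lambda}$ are bounded outright; away from $s=0$ smoothness gives boundedness on $[\delta,\epsilon]$. So your argument closes in a single pass --- no bootstrap is needed --- once you replace ``$u\equiv0$'' by ``the $L^2$ condition excludes the homogeneous mode $Cs^{\lambda}$ and the remaining forced part is bounded by $\|\mathcal{N}(R)\|_{L^1(0,s)}$.''
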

Then, as $s \to \infty$, the solutions $(T_1,T_2,T_3)$ approach their limit exponentially fast, and the form
(\ref{eq:AnalyticConditionsNahmData}) follows.
The precise statement of the decay is given by the following lemma.
\begin{lemma}[{\cite[Lemma 3.4]{kronheimer_hyper-kahlerian_1990}}] \label{decay} Let $(T_1,T_2,T_3)$ satisfy Nahm's equations with the boundary condition \eqref{eq:NahmBddConditionNoT0} where $\tau_j$ is a regular triple and $\underset{s \to \infty}{\lim} T_j(s)=g_0i\tau_jg_0^{-1}$ for some $g_0 \in U(n)$. Then away from $s=0$,  there exists a constant $\kappa >0$ depending only on $(\tau_1,\tau_2,\tau_3)$ such that $|T_j - Ad(g_0)i\tau_j| \leq \text{const} \times e^{-\kappa s}$. \end{lemma}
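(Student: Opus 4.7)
The plan is to linearize Nahm's equations around the equilibrium $i\tau_j$, exploit regularity of $(\tau_1,\tau_2,\tau_3)$ to extract a linear spectral gap, and then run a Duhamel/bootstrap argument to transfer the linear decay to the full nonlinear flow. First, by conjugating with the constant gauge transformation $g_0^{-1}$ (the gauge $T_0=0$ from \eqref{eq:NahmNoT0} is already fixed) we may assume $g_0=1$. Writing $T_j(s)=i\tau_j+b_j(s)$ with $b_j(s)\to 0$ and substituting into \eqref{eq:NahmNoT0} yields
\begin{equation*}
\dot b_j=(Lb)_j+[b_{j+1},b_{j+2}],\qquad (Lb)_j:=i[\tau_{j+1},b_{j+2}]-i[\tau_{j+2},b_{j+1}],
\end{equation*}
with cyclic indices modulo $3$.

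Regularity forces the common stabilizer of $(\tau_1,\tau_2,\tau_3)$ to be a maximal torus, so the three matrices commute and may be simultaneously diagonalized. In that basis $[\tau_k,\cdot]$ acts on the matrix entry $(a,c)$ by multiplication by $\alpha_k^{(ac)}:=\tau_k^{(a)}-\tau_k^{(c)}$, and hence $L$ decouples into $3\times 3$ blocks $v\mapsto i\,\alpha^{(ac)}\times v$ on $v=(b_1^{(ac)},b_2^{(ac)},b_3^{(ac)})\in\mathbb{C}^3$. Each block is Hermitian with spectrum $\{-|\alpha^{(ac)}|,\,0,\,+|\alpha^{(ac)}|\}$, and regularity guarantees $\alpha^{(ac)}\neq 0$ for every $a\neq c$ (otherwise the elementary matrix $E_{ac}$ would enlarge the stabilizer). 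This produces the $\tau$-dependent spectral gap
\begin{equation*}
\kappa := \min_{a\neq c}|\alpha^{(ac)}|>0.
\end{equation*}

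Let $P_+,P_0,P_-$ denote the spectral projections of $L$ onto its positive, zero, and negative eigenspaces. Since $b(s)\to 0$, the unstable and central components are forced by the Duhamel representations
\begin{equation*}
P_+b(s)=-\int_s^\infty e^{L(s-t)}P_+Q(b)(t)\,dt,\qquad P_0b(s)=-\int_s^\infty P_0 Q(b)(t)\,dt,
\end{equation*}
while $P_-b(s)=e^{L(s-s_0)}P_-b(s_0)+\int_{s_0}^s e^{L(s-t)}P_-Q(b)(t)\,dt$, with $Q(b)_j:=[b_{j+1},b_{j+2}]$. Choosing $s_0$ large enough that $\sup_{s\geq s_0}|b(s)|$ is small (which is possible thanks to $b\to 0$), a contraction/bootstrap argument in a sup-norm weighted by $e^{\kappa' s}$ for any fixed $\kappa'<\kappa$ closes and delivers $|b(s)|\leq Ce^{-\kappa' s}$ for $s\geq s_0$; boundedness of the solution on compact subintervals of $(0,\infty)$ then extends this to the stated estimate on every region bounded away from the pole $s=0$.

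The main obstacle is the nontrivial kernel of $L$: the diagonal matrix entries together with the longitudinal direction $v\parallel\alpha^{(ac)}$ inside each off-diagonal block constitute genuine zero modes, so $L$ is not uniformly dissipative on the tangent space of the set of decaying solutions. The resolution is that $Q$ is honestly quadratic, so the integral defining $P_0b$ decays twice as fast as the stable component and is absorbed by it in the bootstrap; handling this coupling between the central and stable spectral projections cleanly is the technical heart of the argument.
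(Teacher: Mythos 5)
Your linearization, the computation of the block spectrum, and the identification of the gap $\kappa=\min_{a\neq c}|\alpha^{(ac)}|>0$ from regularity are all fine, but the treatment of $\ker L$ is a genuine gap, and it is exactly the point the lemma turns on. The proposed resolution --- ``$Q$ is quadratic, so $P_0b(s)=-\int_s^\infty P_0Q(b)\,dt$ decays twice as fast and is absorbed in the bootstrap'' --- is circular: to get $e^{2\kappa' t}$-decay of $Q(b)(t)$ you must already know the exponential decay of $b$ that you are trying to prove, and without further input the claim is false. A center direction forced quadratically generically produces only polynomial decay (the scalar model $\dot y=-y^2$ has solutions tending to equilibrium like $1/s$), so the weighted-norm contraction does not close. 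Worse, your mechanism proves too much: it uses regularity only to guarantee $\alpha^{(ac)}\neq 0$ on the off-diagonal blocks and then handles whatever kernel remains by ``absorption''; applied verbatim to a non-regular triple (where some blocks simply join the kernel) it would again yield exponential decay, contradicting the counterexamples with slow decay cited in the paper (Kronheimer, p.~207). So the argument as written cannot distinguish the regular from the non-regular case, which is precisely what the lemma requires.

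What closes the gap --- and what the paper, following Kronheimer, actually uses --- is the geometry of the equilibrium set rather than just the spectrum of $L$. Nahm's equations \eqref{eq:NahmNoT0} are the gradient flow of $\psi(T_1,T_2,T_3)=\tr T_1[T_2,T_3]$, whose critical set consists of commuting triples; regularity makes this set a smooth manifold near $\mathrm{Ad}(g_0)(i\tau_1,i\tau_2,i\tau_3)$ whose tangent space is exactly your $\ker L$ (the diagonal directions plus the longitudinal direction in each off-diagonal block; both sides have real dimension $n^2+2n$). Thus the zero modes are tangent to a manifold made up entirely of fixed points: the critical manifold is normally hyperbolic (Morse--Bott, with transverse Hessian gap giving the $\kappa$ of the lemma), and exponential convergence of a gradient trajectory to such a critical set is a general fact --- that is the paper's short proof. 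If you want to keep the Duhamel framework, you must first extract the consequence of this geometry, namely that $P_0Q(b_0,b_0)=0$ for $b_0\in\ker L$ (because the vector field vanishes identically on the critical manifold, which is tangent to $\ker L$), so every term in the center equation carries at least one hyperbolic factor; only then can a coupled estimate for $P_0b$ and $P_\pm b$ be made to close. As it stands, the proposal omits the one ingredient that separates the regular case from the non-regular one.
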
 

It is in the proof of this lemma 
where we see the requirement for $(\tau_1,\tau_2,\tau_3)$ to be a regular triplet. Here Nahm's equations \eqref{eq:NahmNoT0} are expressed as the gradient-flow equations for the function \begin{align*}\psi(T_1,T_2,T_3)=\tr T_1[T_2,T_3]. \end{align*} 
The critical set $C$ of this flow consists of triples $(T_1,T_2,T_3)$ which commute and the condition of regularity makes $C$ a smooth manifold in the neighborhood of $\text{Ad}(g_0)(i\tau_1,i\tau_2,i\tau_3)$. Exponential decay holds in general for any gradient system in the neighborhood of such a `hyperbolic' critical set with non-degenerate Hessian. (Any $\kappa$ of the lemma smaller than the smallest positive eigenvalue of the Hessian of $\psi$ will do.)
However, in the case that $(\tau_1,\tau_2,\tau_3)$ is not regular, the asymptotic behavior of Lemma \ref{decay} does not in general hold; simple counterexamples may be found in \cite[p.207]{kronheimer_hyper-kahlerian_1990}.

Further analysis then shows that with $T_0(s)=0$ there is an unique solution to the Nahm equations of \eqref{eq:NahmNoT0} with the required boundary conditions  \cite{atiyah_configurations_2001}.
The gauge transform of this solution by any $g\in \mathcal{G}$ thus produces a solution of \eqref{eq:NahmT0}
satisfying the boundary conditions \eqref{eq:NahmBddConditionNoT0} and so the solution
is unique only up to the action of the gauge group $\mathcal{G}$. 

In the next few sections, we construct a particular solution to \eqref{eq:NahmT0} satisfying \eqref{eq:NahmBddConditionNoT0}. We are not able to constructively solve for the gauge $g(s) \in \mathcal{G}$ that will put the Nahm solution in the form with $T_0(s)= 0$.

\section{The Spectral Approach}
\label{section:TheSpectralApproach}

The spectral method of integration begins with a Lax equation that is equivalent to the system of Nahm's equations. The mathematical formalism of this approach to integrating Nahm's equations was originally carried out by Hitchin in 1983 \cite{hitchin_construction_1983} for the $SU(2)$ monopole. We adopt the same approach.


In this section, we discuss the Lax  pair as well as the properties of the associated spectral curve $\mathcal{C}$ and eigenline bundle. We prove that for $s \in (0,\infty)$, the space of global sections of the line bundle $L^s(n-1)$ over $\mathcal{C}$ has dimension $n$.

\subsection{Lax Equation}

The Lax pair $(L,M)$ satisfying
\begin{align}
\frac{d}{ds}L = [L,M]
\end{align}
for Nahm's equations are given 
by
\begin{equation} \label{eq:NahmLaxN}
L^N=-i(T_1+iT_2)+2iT_3\zeta+i(T_1-iT_2)\zeta^2, \  M^N=T_0-i(T_3+(T_1-iT_2)\zeta), 
\end{equation}
{in the North patch $\mathbb{C}$ of
$\mathbb{P}\sp1=\mathbb{C}\cup\{\infty\}$, and by}
\begin{equation} \label{eq:NahmLaxS}
L^S=i(T_1-iT_2)+2iT_3\frac{1}{\zeta}-i(T_1+iT_2)\frac{1}{\zeta^2}, \ \  M^S=T_0+i(T_3-(T_1+iT_2)\frac{1}{\zeta}),
\end{equation}
{in the South patch $\{\zeta\ne0\}$ of
$\mathbb{P}\sp1$.}

Crucially, $L$ and $M$ have transition functions given by
\begin{align} \label{eq:transition function Nahm Lax}
L^S=\frac{1}{\zeta^2}L^N, \ \ \ M^S=M^N+\frac{1}{\zeta}L^N,
\end{align}
and $L$ and $M$ satisfy the reality relationship
\begin{align} \label{eq: reality Nahm L M}
L^N(-1/\bar{\zeta})^\dagger = -L^S(\zeta), \ \ \ \ \ M^N(-1/\bar{\zeta})^\dagger=-M^S(\zeta).
\end{align}

Conversely, if a Lax pair $(L,M)$ is holomorphic with respect to the spectral parameter $\zeta\in\mathbb{P}^1$ and satisfies the transition functions \eqref{eq:transition function Nahm Lax} as well as the reality conditions \eqref{eq: reality Nahm L M}, then it follows that $(L,M)$ has the form (\ref{eq:NahmLaxN},  \ref{eq:NahmLaxS}) with 
$(T_0,T_1,T_2,T_3)$
solving Nahm's equations. 

\subsection{Spectral Curve}
The principal feature of a Lax pair $(L,M)$ is that the spectrum of $L(s,\zeta)$ does not evolve in $s$. The transition function \eqref{eq:transition function Nahm Lax} for $L(s,\zeta)$ identifies its spectrum as a subset of the total space of the bundle $\mathcal{O}_{\mathbb{P}^1}(2)$. As $\mathcal{O}_{\mathbb{P}^1}(2) \cong T\mathbb{P}^1$, we may write the spectral curve $\mathcal{C}$ as the subset of $T\mathbb{P}^1$ satisfying
\begin{align}
\det (\eta - L(s,\zeta))=0,
\end{align}
for $\eta \frac{\partial}{\partial \zeta} \in T\mathbb{P}^1$.
Our boundary conditions allow us to write this spectral curve explicitly.

Let $(i\tau^1,i\tau^2,i\tau^3)$ be the regular triplet of \eqref{eq:NahmBddConditionNoT0}; without loss of generality
these commuting matrices may be taken to be diagonal. The diagonal entries of the triplet then pick out a configuration of $n$ points in $\mathbb{R}^3$ given\footnote{Here $(x_j^1,x_j^2,x_j^3)\in\mathbb{R}^3$ corresponds to the position of the $j$-th Dirac monopole.} by $(x_j^1,x_j^2,x_j^3)=\left(\tau_{jj}^1,\tau_{jj}^2,\tau_{jj}^3\right)$ for $j=1,2,\dots,n$.
We have
\begin{align}
\underset{s \to \infty}{\lim} L^N(s,\zeta) \in ad_{U(n)}\left(\begin{smallmatrix} p_1(\zeta) & 0 & \dots & 0 \\ 0 & p_2(\zeta) & \dots & 0 \\  & & \ddots &    \\ 0 & 0 & \dots & p_n(\zeta) \end{smallmatrix}\right)
\end{align}
where $p_{j}(\zeta)=x_j^1+ix_j^2 - 2x_j^3 \zeta - (x_j^1-ix_j^2)\zeta^2:=z_j - 2x_j^3 \zeta -\bar z_j \zeta^2 $. 

The spectral curve does not evolve in $s$ so it follows that $\mathcal{C} \subset T\mathbb{P}^1$ is \begin{align} \label{eq:Spectral Curve Equation}
 \mathcal{C}=\left\{ (\zeta,\eta) \in T\mathbb{P}^1 : \overset{n}{\underset{j=1}{\prod}} (\eta-p_j(\zeta))=0 \right\}.
	\end{align}
Therefore $\mathcal{C}$ is a degenerate $n$-sheeted covering of $\mathbb{P}^1$ where each sheet $\mathcal{C}_i=\{(\zeta,p_i(\zeta)) \in T\mathbb{P}^1\}$ is isomorphic to $\mathbb{P}^1$. We note that Hitchin's construction allows the spectral curve to be 
singular or reducible. The spectral curve is invariant under the antiholomorphic involution $\mathcal{J}(\zeta,\eta)
=(-1/\bar\zeta, -\bar\eta/\bar\zeta^2)$.

The double points correspond to values of $\zeta$ where $p_i(\zeta)=p_j(\zeta)$ (for $i\ne j$).
For two distinct points $x_i$ and $x_j$ of the configuration with distance $r_{ij}$, the sheets $\mathcal{C}_i$ and $\mathcal{C}_j$ intersect at the double points $\zeta=a_{ij}$ and $\zeta=a_{ji}$ where
\begin{align} \label{eq:Spectraldoublepoint}
a_{ij}=\frac{x_i^3-x_j^3+r_{ij}}{x_j^1-x_i^1-i(x_j^2-x_i^2)}.
\end{align}
Note that $a_{ij}=-1/\bar{a}_{ji}$.
The point $a_{ij}$ corresponds to the direction in $\mathbb{R}\sp3$ from $x_i$ to $x_j$ under the stereographic projection $(z,x^3) \mapsto \frac{z}{1+x^3}$. To see this, the direction from $x_i$ to $x_j$ is the unit vector $\frac{x_j-x_i}{r_{ij}}$. Under the stereographic projection, this becomes $\frac{z_j-z_i}{x_j^3-x_i^3+r_{ij}}=\frac{\left(z_j-z_i\right)\left(x_i^3-x_j^3+r_{ij}\right)}{|z_i-z_j|^2}$, which simplifies to the formula \eqref{eq:Spectraldoublepoint}.
Note, if $x_i$ and $x_j$ are vertically separated (i.e. $x_i^1=x_j^1$ and $x_i^2=x_j^2$) and, say $x_i^3 >x_j^3$, then $a_{ij}=\infty \in \mathbb{P}^1$ and $a_{ji}=0 \in \mathbb{P}^1$. For simplicity we will assume throughout that the $n$ Dirac monopoles are generically situated; 
other configurations may be obtained by the appropriate limit of this. 

\subsection{Eigenline Bundle}
The Nahm solutions produce a linear flow $L^s(n-1)$ in $\text{Jac}(\mathcal{C})$, where the superscript $s$ runs over the interval $(0,\infty)$. The line bundle $L^s(n-1)$ is the eigenline bundle over the spectral curve $\mathcal{C}$ of the Lax pair $(L,M)$ at time $s$. Let us describe how this eigenline bundle arises; because we are dealing with a singular
curve the line bundles are described by points in the generalized Jacobian.

First, we consider the solutions $U_j=(U_j^N,U_j^S)$ to the Lax linear problem associated to $(L,M)$, 
\begin{align} \label{eq:LaxLinearProblem} 
\begin{cases}
(\frac{d}{ds}+M^N)U_j^N=0, \\
(L^N-p_j(\zeta))U_j^N=0,
\end{cases} 
&&\begin{cases}
(\frac{d}{ds}+M^S)U_j^S=0, \\
(L^S-\frac{p_j(\zeta)}{\zeta^2})U_j^S=0.
\end{cases}
\end{align}
In other words, each $U_j$ is an eigenvector of $L$ for the eigenvalue $\eta=p_j(\zeta)$ on the sheet $\mathcal{C}_j$ that evolves in $s$ as $\frac{dU_j}{ds}=-M U_j$. The eigenline bundle $L^s(n-1)$ over $\mathcal{C}$ is the line bundle where the fiber over the sheet $\mathcal{C}_j$ is the eigenline spanned by $U_j$. 

Now the line bundle $L^s(n-1)$ is specified by the transition functions between the North and South patches and the behaviour at the double points of $\mathcal{C}$. They are determined by the transition functions \eqref{eq:transition function Nahm Lax} of $(L,M)$ and by the boundary condition \eqref{eq:NahmBddConditionNoT0} at $s=0$.

\begin{proposition}[\cite{braden_construction_2018}] \label{defn:EigenlineBundle} The eigenline bundle $L^s(n-1)$ for $s \in (0,\infty)$ over the spectral curve $\mathcal{C}$  with section $u(s,\zeta)$ has the transition functions 
	\begin{align} \label{NSTF}
	u^N(s,\zeta)=\zeta^{n-1} u^S(s,\zeta)\diag(e^{sp_1(\zeta)/\zeta},\dots,e^{sp_n(\zeta)/\zeta}),
	\end{align}
	and $u_i(a_{ij})=u_j(a_{ij})$ at the double points $a_{ij}$ for  $i,j=1,2,\dots,n$ with $i \neq j$, where $u_i$ and $u_j$ are the values of the section $u$ on the $\mathcal{C}_i$ and $\mathcal{C}_j$ sheets.
\end{proposition}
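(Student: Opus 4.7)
The plan is to derive both pieces of data directly from the Lax linear problem (\ref{eq:LaxLinearProblem}), splitting the analysis into (i) the North--South transition on each sheet, and (ii) the gluing at the double points of $\mathcal{C}$.

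First I would fix a sheet $\mathcal{C}_j$ and compare the two solutions $U_j^N, U_j^S$ of (\ref{eq:LaxLinearProblem}). Since $L^S = \zeta^{-2}L^N$, any eigenvector of $L^N$ with eigenvalue $p_j(\zeta)$ is automatically an eigenvector of $L^S$ with eigenvalue $p_j(\zeta)/\zeta^2$, so necessarily $U_j^N = \phi_j(s,\zeta)\, U_j^S$ for some scalar $\phi_j$. Differentiating this relation in $s$ and using the evolution parts of (\ref{eq:LaxLinearProblem}) together with $M^S - M^N = L^N/\zeta$ from (\ref{eq:transition function Nahm Lax}) yields
\begin{align*}
(\partial_s \phi_j)\, U_j^S = \phi_j (M^S - M^N)U_j^S = \frac{\phi_j}{\zeta} L^N U_j^S = \frac{p_j(\zeta)}{\zeta}\,\phi_j U_j^S,
\end{align*}
so $\phi_j(s,\zeta) = c_j(\zeta)\, e^{sp_j(\zeta)/\zeta}$ with $c_j(\zeta)$ an $s$-independent holomorphic factor.

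Next I would fix $c_j(\zeta) = \zeta^{n-1}$ using the pole boundary condition (\ref{eq:NahmBddConditionNoT0}) at $s=0$. After multiplying (\ref{eq:LaxLinearProblem}) through by $s$, the matrix $sL^N(s,\zeta)$ extends to $s=0$ with limit the regular nilpotent $L_0(\zeta) = \sigma_+ - \sigma_3 \zeta - \sigma_-\zeta^2$ coming from the $n$-dimensional $\mathfrak{su}(2)$ irrep, while $sL^S = sL^N/\zeta^2$ becomes singular as $\zeta\to 0$. An indicial analysis at $s=0$ selects a unique (up to scale) eigenvector in each patch; comparing the resulting polynomial-in-$\zeta$ representatives of their entries in the North and South trivializations produces precisely the factor $\zeta^{n-1}$, reflecting that the Nahm pole twists the restriction of the eigenline bundle to each component $\mathcal{C}_j \cong \mathbb{P}^1$ by $\mathcal{O}(n-1)$. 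Combining with the exponential above yields (\ref{NSTF}).

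Finally I would address the gluing at the double points. At $a_{ij}$ the sheets meet with common eigenvalue $\eta_{ij} := p_i(a_{ij}) = p_j(a_{ij})$, and both $U_i(s,a_{ij})$ and $U_j(s,a_{ij})$ lie in the $\eta_{ij}$-eigenspace of $L(s,a_{ij})$, which for generic Nahm data is two-dimensional. A line bundle on the nodal curve $\mathcal{C}$ is specified by a line bundle on the normalization $\bigsqcup_j \mathcal{C}_j$ together with a gluing datum at each pair of preimages of a node; the canonical choice here identifies the two lines spanned by $U_i(a_{ij})$ and $U_j(a_{ij})$ inside the shared eigenspace and, after the normalization fixed above, reads $u_i(a_{ij}) = u_j(a_{ij})$. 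This is preserved in $s$ since both sides satisfy the same evolution $(\partial_s + M(s,a_{ij}))u = 0$. The main obstacle is the middle step: extracting $c_j(\zeta) = \zeta^{n-1}$ from the Nahm pole requires a careful indicial analysis of the Lax linear system near $s=0$, tracking how the regular-nilpotent residue in the $n$-dimensional irrep forces the eigenline bundle to be twisted by $\mathcal{O}(n-1)$ on each sheet; the other ingredients are formal consequences of the Lax pair structure and the definition of the generalized Jacobian of a nodal curve.
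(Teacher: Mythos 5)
Your first step coincides with the paper's: using $L^N=\zeta^2L^S$ to see the transition matrix is diagonal, and the relation $M^S-M^N=L^N/\zeta$ from \eqref{eq:transition function Nahm Lax} to get $U^N_j=c_j(\zeta)e^{sp_j(\zeta)/\zeta}U^S_j$. For the factor $\zeta^{n-1}$ you propose an ``indicial analysis'' whose conclusion you assert rather than derive; the paper instead pins the twist down by appeal to Proposition \ref{prop: dimension of sections} (the requirement $h^0=n$), so this step is at worst comparable in completeness, though as written it carries no force.

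The genuine gap is at the double points. You claim the $\eta_{ij}$-eigenspace of $L(s,a_{ij})$ is two-dimensional for generic Nahm data and then declare the identity gluing to be ``the canonical choice.'' Both halves of this are wrong in spirit. First, the paper shows the opposite fact and needs it: near $s=0$, after conjugating by $e^{\zeta\sigma_-}$ and using the residual complex gauge freedom, $L^N$ is brought to companion (rational canonical) form, which is nonderogatory, so the eigenspace for the repeated eigenvalue at $\zeta=a_{ij}$ remains \emph{one}-dimensional; this is precisely why the eigenline bundle extends across the node and why a matching condition exists at all. Second, and more importantly, ``canonical choice'' begs the question: as the paper's discussion of the generalized Jacobian makes explicit, different gluings at the nodes give genuinely different line bundles, and the content of the proposition is that the particular bundle produced by the Nahm flow with the boundary condition \eqref{eq:NahmBddConditionNoT0} has the gluing $u_i(a_{ij})=u_j(a_{ij})$. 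Which gluing occurs depends on the normalization of the eigenvector sections, and that normalization is fixed only by the explicit $s\to0$ computation: solving $(\frac{d}{ds}+M^N)U=0$ as $U=e^{\zeta\sigma_-}s^{-\sigma_3/2}U'$, reducing the eigenvalue problem to a constant companion matrix with eigenvectors proportional to $(1,\eta,\dots,\eta^{n-1})^T$, so that $U_i(a_{ij})=U_j(a_{ij})$ follows from $p_i(a_{ij})=p_j(a_{ij})$. Your observation that equality, once established at some $s$, propagates because both sections obey $(\partial_s+M)u=0$ is fine, but without the $s\to0$ analysis (or an equivalent normalization argument) you never establish it anywhere, so the matching condition --- and with it the identification with the polynomial conditions carrying the factors $e^{-sr_{ij}}$ --- is not proved.
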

\begin{proof} Set $U=(U_1,\dots,U_n)$ to be the solutions to the Lax linear problem \eqref{eq:LaxLinearProblem}. We are looking for the transition function $F(s,\zeta)$ of $U^N=U^SF(s,\zeta)$.
From the transition of $L$ \eqref{eq:transition function Nahm Lax}, we have $L^N=\zeta^2 L^S$ so that $L^N$ and $L^S$ have the same eigenspace and therefore $U^N=U^SF(s,\zeta)$ for a diagonal matrix $F(s,\zeta)$.
From the transitions of $M$ \eqref{eq:transition function Nahm Lax},
\begin{align*}
0 &=\left(\frac{d}{ds}+M^N\right)U^N 
=\left(\frac{d}{ds}+M^S-\frac{1}{\zeta}L^N\right)U^SF \\ 
&=\left[\left(\frac{d}{ds}+M^S\right)U^S\right]F+U^S \frac{dF}{ds}-U^S\frac{\eta}{\zeta}F\\
&=U^S\left(\frac{d}{ds}-\frac{\eta}{\zeta}\right)F.
\end{align*}
Thus $F(\zeta,s)=g(\zeta)e^{s \frac{\eta}{\zeta}}$ for some $s$-independent diagonal matrix $g(\zeta)$. We will see later from Proposition \ref{prop: dimension of sections} that $g(\zeta)=\zeta^{n-1}$ and
so we have (\ref{NSTF}).

We now address what occurs to the eigenbundle $L^s(n-1)$ at the double points $\zeta=a_{ij}$ of \eqref{eq:Spectraldoublepoint} for the spectral curve $\mathcal{C}$. First we simplify our linear problem.
Setting $\sigma_\pm = \frac{1}{2}(\sigma_1+i\sigma_2)$, we have near $s=0$ that 
$
    T_j = \frac{i\sigma_j}{2s}+O(s^0)
$
(for $j=1,2,3$) and so 
\begin{equation*}
L^N(s):=-i(T_1+iT_2)+2iT_3\zeta+i(T_1-iT_2)\zeta^2 
=(\sigma_+-\zeta\sigma_3-\zeta^2\sigma_-)\frac{1}{s}+O(s^0).
\end{equation*}
Applying the relations
\begin{align} \label{eq:PauliExpRelation}
e^{\zeta \sigma_-}\sigma_3e^{-\zeta \sigma_-}=\sigma_3+2\zeta \sigma_-, \qquad e^{\zeta \sigma_-}\sigma_+e^{-\zeta \sigma_-}=\sigma_+-\zeta\sigma_3-\zeta^2\sigma_-,
\end{align}
it follows that (in the gauge $T_0=0$)
\begin{align}\label{gaugeLM}
L^N(s)=e^{\zeta \sigma_-}\big(\frac{\sigma_+}{s}+O(s^0)\big)e^{-\zeta \sigma_-}, \ \ \ \  \frac{d}{ds}+M^N=e^{\zeta \sigma_-}\big(\frac{d}{ds}+\frac{\sigma_3}{2s}+O(s^0)\big)e^{-\zeta \sigma_-}.
\end{align}
Next we may use a complex gauge transformation that is identity at $s=0$ to gauge $M^N$ to have no $O(s^0)$ term. The remaining complex gauge transformations that preserve this form of $M^N$ are $g'=e^{\zeta\sigma_-}s^{-\sigma_3/2}g_c s^{\sigma_3/2}e^{-\zeta\sigma_3}$ with any constant $g_c \in GL(n,\mathbb{C})$.

Since $L^N$ satisfies the Lax equation and has the form (\ref{gaugeLM}), its adjoint weight decomposition contains only negative weights.
Using the remaining complex gauge transformations $g'$ we may then put $L^N$
into the form\footnote{The observation regarding the weight decomposition means that 
(\ref{gaugeLM}) takes the form
$
L^N(s)=e^{\zeta \sigma_-}\Bigg[
\frac1{s}\sum_{i=1}\sp{n-1}E_{i\, i+1}
+ \sum_{i\ge j}a_{ij} E_{ij}\, s^{i-j}
\Bigg]e^{-\zeta \sigma_-}
$. Now take $g_c=\begin{pmatrix}
1&0&0&\ldots\\ g_{21}&1&0&\ldots\\
g_{31}&g_{32}&1&\ddots
\end{pmatrix}$. Then (ignoring the exponential factors) $g'_{ij}= g_{ij}s^{i-j}$ and similarly for the inverse $g'\sp{-1}$. 
Now we may solve for the main diagonal of
$g'\sp{-1} L^N g'$ to be the constant ($a_0$, say)
in terms of 
$g_{21}$,\ldots,$g_{n\, n-1}$ and
$a_0$; 
then we may recursively solve for the 
next lower diagonal to be constant ($a_1$, say) in terms of $g_{31}$,\ldots, $g_{n\, n-2}$
and $a_1$; and so on.
}
\begin{align}
  L^N(s)&=e^{\zeta \sigma_-}\left(\frac{\sigma_+}{s}+a_0+a_1s\sigma_-+a_2s^2\sigma_-^2+\dots+a_{n-1}s^{n-1}\sigma_-^{n-1}\right)e^{-\zeta \sigma_-} 
  \nonumber \\
&=e^{\zeta \sigma_-} \begin{pmatrix}
a_0&s^{-1}&0&0&\ldots&0\\
a_1 s&a_0&s^{-1}&0&\ldots&0\\
a_2 s^2&a_1 s&a_0&s^{-1}&&0\\
\vdots&\ddots&\ddots&\ddots&&\\
a_{n-2}s^{n-2}&&&&&s^{-1}\\
a_{n-1}s^{n-1}&a_{n-2}s^{n-2}&& &a_1 s&a_0
\end{pmatrix} e^{-\zeta \sigma_-}.
\end{align}
The coefficients $a_j=a_j(\zeta)$ are determined by the spectral curve. At this stage we have sufficiently simplifed $L^N$ and
$$
\frac{d}{ds}+M^N=e^{\zeta \sigma_-}\Bigg[\frac{d}{ds}+\frac{\sigma_3}{2s}\Bigg]e^{-\zeta \sigma_-}
$$
to solve the linear problem. Clearly $U=e^{\zeta \sigma_-}
s^{-\sigma_3/2} U'$ for any constant $U'$ satisfies $(\frac{d}{ds}+M^N)U=0$. Next observe that
\begin{equation*}
\begin{split}
s^{\sigma_3/2}&
\left(\frac{\sigma_+}{s}+a_0+a_1s\sigma_-+a_2s^2\sigma_-^2+\dots+a_{n-1}s^{n-1}\sigma_-^{n-1}\right)
s^{-\sigma_3/2}
\\ &=
\left(\sigma_+ +a_0+a_1\sigma_-+a_2\sigma_-^2+\dots+a_{n-1}\sigma_-^{n-1}\right).
\end{split}
\end{equation*}
Thus
$$
0=(L^N(s)-\eta)U=
e^{\zeta \sigma_-}
s^{-\sigma_3/2}\left[
\sigma_+ +a_0+a_1\sigma_-+a_2\sigma_-^2+\dots+a_{n-1}\sigma_-^{n-1} -\eta\right] U'
$$
and we have reduced the problem to the eigenvectors of a
constant matrix. If $R\sp{-1}
\left[
\sigma_+ +a_0+a_1\sigma_-+a_2\sigma_-^2+\dots+a_{n-1}\sigma_-^{n-1}\right]R$ is in rational canonical form, and so with eigenvectors proportional to $( 1 , \eta, \eta^2 ,\ldots , \eta\sp{n-1})\sp{T}$ for the eigenvalue $\eta$, we see we may write
\begin{align}
{U}'=R(\zeta)\begin{pmatrix}1 & 1 & \dots & 1 \\ p_1(\zeta) & p_2(\zeta) & \dots & p_n(\zeta) \\ \vdots & & & \vdots \\ p_1(\zeta)^{n-1} & p_2(\zeta)^{n-1} & \dots & p_n(\zeta)^{n-1} \end{pmatrix}\Lambda
\end{align}
for some diagonal matrix $\Lambda$.
From this we see that the eigenspaces of ${L^N}$ remain one-dimensional
at $\zeta=a_{ij}$ and we may set $\Lambda$ to be  the identity matrix. It follows that
\begin{align} \label{eq:SectionMatchingConditions}U_i(a_{ij})=U_j(a_{ij}).
\end{align}
Our analysis has thus far been for $s$ in the vicinity of $0$, but the results of the last section show that the eigenspace of $L(s,\zeta)$ remains one-dimensional for all finite $s$.

Each row of the matrix $U$ gives us a section $u$ of the line bundle $L^s(n-1)$. Crucially, there is an isomorphism between the sections of $L^s(n-1)$ and the sections $q$ of the line bundle defined by the transition functions
\begin{align} \label{eq:PolynomialRowsBundle}
\begin{split}
    q_i(a_{ij})&=e^{-sr_{ij}}q_j(a_{ij})   \text{ for } a_{ij} \text{ a double point of } \mathcal{C}, \\
    q^N(\zeta)&=\zeta^{n-1}q^S(\zeta).
    \end{split}
\end{align}
The isomorphism is obtained by using the splitting $\frac{p_j(\zeta)}{\zeta}=-h_j^+(\zeta)-h_j^-(\zeta)$, where $h_j^+(\zeta)=p_j^3+(p_j^1-ip_j^2)\zeta$ and $h_j^-(\zeta)=-\frac{p_j^1+ip_j^2}{\zeta}+p_j^3=\overline{h_j^+(-1/\overline{\zeta}) } $,
and then setting
\begin{align} \label{eq:QandURelationship}
	q^N(\zeta)=u^N(\zeta)\,\text{diag}(e^{sh_j^+(\zeta)}), \ \ \ q^S(\zeta)=u^S(\zeta)\,\text{diag}(e^{-sh_j^-(1/\zeta)}).
\end{align}
Here we have used that $h_i^+(a_{ij})- h_j^+(a_{ij})=-r_{ij}$.

The transition functions for $q$ imply that the restriction $q^N \lvert_{\mathcal{C}_i}(\zeta)$ to the $i$-th sheet is a polynomial in $\zeta$ of degree at most $n-1$.  By writing $q^N=(q^N\lvert_{\mathcal{C}_1},\dots,q^N\lvert_{\mathcal{C}_n})$, we see $q^N$ is a row of polynomials in $\zeta$ each of degree less than or equal to $n-1$  that satisfy the matching conditions $q \lvert_{\mathcal{C}_i}(a_{ij})=e^{-sr_{ij}} q \lvert_{\mathcal{C}_j}(a_{ij})$. For ease of notation we will write $Q_j(\zeta):=q^N \lvert_{\mathcal{C}_j}.$

\end{proof}

We now show that the space of sections of $L^s(n-1)$ has dimension $n$ for $s \in (0,\infty)$; we will need this fact to prove invertibility of the linear systems we use to construct the polynomials in later sections.

\begin{proposition} \label{prop: dimension of sections}
	For $s \in (0,\infty)$, the dimension 
	of $H^0(\mathcal{C},L^s(n-1))$ is $n$.
\end{proposition}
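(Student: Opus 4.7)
Via Proposition~\ref{defn:EigenlineBundle}, sections of $L^s(n-1)$ correspond to polynomial tuples $(Q_1,\dots,Q_n)$ with $\deg Q_j\le n-1$, subject to the $n(n-1)$ matching conditions $Q_i(a_{ij}) = e^{-sr_{ij}}Q_j(a_{ij})$ at the double points. My strategy is to show these conditions impose exactly $n(n-1)$ linearly independent constraints on the $n^2$-dimensional space of unconstrained polynomial tuples, so that $\dim H^0 = n^2 - n(n-1) = n$.

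\textbf{Cohomological dimension count.} Let $\nu:\tilde{\mathcal{C}} = \bigsqcup_{j=1}^n \mathcal{C}_j \to \mathcal{C}$ be the normalization. The exact sequence of sheaves
\[
0 \to L^s(n-1) \to \nu_*\nu^*L^s(n-1) \to \mathcal{S} \to 0,
\]
with $\mathcal{S}$ the skyscraper sheaf at the $n(n-1)$ nodes, together with $\nu^* L^s(n-1)|_{\mathcal{C}_j}\cong\mathcal{O}_{\mathbb{P}^1}(n-1)$ (for which $h^0=n$, $h^1=0$), yields
\[
0 \to H^0(\mathcal{C},L^s(n-1)) \to \mathbb{C}^{n^2} \xrightarrow{\phi_s} \mathbb{C}^{n(n-1)} \to H^1(\mathcal{C},L^s(n-1)) \to 0,
\]
where $\phi_s$ encodes the matching condition differences. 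This is equivalent to Riemann--Roch on the nodal curve $\mathcal{C}$ of arithmetic genus $(n-1)^2$ with the bundle of degree $n(n-1)$; either way, $\dim H^0 - \dim H^1 = n$. Hence the proposition reduces to showing $H^1 = 0$, i.e.\ that $\phi_s$ is surjective.

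\textbf{Surjectivity of $\phi_s$.} As $s \to \infty$ the conditions decouple: $e^{-sr_{ij}} \to 0$ and each $Q_j$ is required to vanish at the $n-1$ distinct points $\{a_{jk}\}_{k\ne j}$, so $\phi_\infty$ factors as a direct sum of surjections $\mathbb{C}^n \twoheadrightarrow \mathbb{C}^{n-1}$ and is therefore surjective. By upper semicontinuity of the cokernel dimension, $\phi_s$ remains surjective on an open dense subset of $(0,\infty)$. I would then pin down the possibly degenerate set by choosing a distinguished $n(n-1)\times n(n-1)$ minor of the matrix of $\phi_s$; its determinant is an entire function of $s$ that is nonvanishing for large $s$. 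To conclude, one shows this determinant has no zeros on the positive real axis: for $n=2$ it reduces to a multiple of $1 - e^{-2sr_{12}}$, which is strictly positive for $s>0$, and for general $n$ one expects an analogous explicit expression combining positivity factors $1 - e^{-2sr_{ij}}$ with a Vandermonde-like determinant in the node positions $a_{ij}$.

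\textbf{Main obstacle.} The central difficulty is ruling out possible jumps of $h^0$ above $n$ at exceptional values $s \in (0,\infty)$. Producing $n$ explicit sections from the rows of the Nahm eigenvector matrix $U(s,\zeta)$ constructed in Proposition~\ref{defn:EigenlineBundle} only establishes the lower bound $h^0 \ge n$, equivalent to $h^1 \ge 0$, which is vacuous given the Riemann--Roch identity. Establishing $h^1 = 0$ uniformly on $(0,\infty)$ requires a genuine positivity argument exploiting $s>0$ and $r_{ij}>0$: either an explicit determinant calculation as above, or a Serre-duality-style argument leveraging the antiholomorphic involution $\mathcal{J}$ of $\mathcal{C}$ together with the reality relations \eqref{eq: reality Nahm L M}.
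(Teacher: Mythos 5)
Your first half is sound and matches the paper's bookkeeping: the normalization sequence (equivalently Riemann--Roch on the nodal curve, which is how the paper, following \cite{hitchin_construction_1983}, sets things up) gives $h^0-h^1=n$, so everything hinges on showing the matching map has full rank $n(n-1)$ for \emph{every} $s\in(0,\infty)$ --- equivalently, in the paper's formulation, that $h^0(\mathcal{C},L^s(n-2))=0$ for every such $s$. This is exactly where your argument stops short. Large-$s$ decoupling plus semicontinuity only gives surjectivity of $\phi_s$ outside the zero set of an exponential-polynomial determinant, i.e.\ off a possibly nonempty discrete set of exceptional $s$; your proposed exclusion of such zeros on the positive axis is carried out only for $n=2$, and for general $n$ you write that ``one expects'' an analogous positive expression. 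Since the entire content of the proposition beyond the index count is precisely the absence of jumping at any $s>0$, this is a genuine gap, and you identify it yourself in your ``Main obstacle'' paragraph without resolving it.

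For comparison, the paper closes this gap with a perturbation-expansion argument rather than a determinant formula. It reduces via Riemann--Roch to showing $L^s(n-2)$ has no sections, quotes the large-$s$ vanishing from \cite[Cor.~3.3]{bielawski_reducible_2007}, and expands any putative null-space family as $P_i(s,\zeta)=e^{-s\Delta_i}g_i(\zeta)+e^{-s\Delta_i'}g_i'(\zeta)+\dots$ with $0<\Delta_i<\Delta_i'<\dots$ and $s$-independent polynomials of degree at most $n-2$. Substituting into $P_i(s,a_{ij})=e^{-sr_{ij}}P_j(s,a_{ij})$, the $s$-independence of the leading coefficients forces, for each ordered pair, either $g_i(a_{ij})=0$ or $\Delta_i-\Delta_j=r_{ij}$; since $r_{ij}>0$, the relations $\Delta_i-\Delta_j=r_{ij}$ and $\Delta_j-\Delta_i=r_{ij}$ cannot both hold, so $g_i$ must vanish at one of the two double points $a_{ij},a_{ji}$ for each $j\neq i$, i.e.\ at $n-1$ points, and having degree $\le n-2$ it vanishes identically; induction on the order of the expansion then annihilates the whole series. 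This exploitation of the strict positivity of the distances $r_{ij}$ together with the degree count is the quantitative input your sketch lacks for general $n$; to complete your route you would need to actually establish your conjectured nonvanishing of the $n(n-1)\times n(n-1)$ determinant on all of $(0,\infty)$ (or carry out the Serre-duality/reality argument you allude to), neither of which is done.
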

\begin{proof}
A direct application of the Riemann-Roch formula \cite[pp.165-166]{hitchin_construction_1983} implies that if $\dim H^0(\mathcal{C},L^s(n-2))=0$, then $\dim H^0(\mathcal{C},L^s(n-1))=n$. Therefore, in this proof, we consider the line bundle $L^s(n-2)$ and show the dimension of its global sections is zero. 

A section of $L^s(n-2)$ corresponds to an $n$-tuple of degree $n-2$ polynomials satisfying matching conditions 
	\begin{align} \label{eq:matching} P_i(s,a_{ij})=e^{-sr_{ij}}P_j(s,a_{ij}), \end{align} 
in the same vein as \eqref{eq:PolynomialRowsBundle}. 
The matching conditions \eqref{eq:matching} form a linear system for the unknown polynomial coefficients. The $s$ dependence enters the linear system via functions $e^{-sr_{ij}}$. Upon writing this linear system in the form $Ax=0$ (such a matrix $A$ is given in Subsection \ref{subsection: Lagrangian Approach}) the $s$-dependent terms in $A$ are in the form $e^{-s \Delta}$ for $\Delta>0$ and it is known that $L^s(n-2)$ has no nonzero sections when $s$ is large \cite[Cor. 3.3]{bielawski_reducible_2007}. 

The null space of $A$ may then be written as a series in $e^{-s \Delta}$ for various positive $\Delta$, 
and we then write the polynomials as a formal series, which we denote as its perturbation expansion,
\begin{align}
	P_i(s,\zeta)=g_i(\zeta)+e^{-s\Delta'_i}g_i'(\zeta)+e^{-s\Delta_i''}g''_i(\zeta)+\dots
\end{align}
	with $0 < \Delta_i' <\Delta_i''<\dots$ and $s$-independent polynomials $g_i(\zeta)$,$g_i'(\zeta)$, $g_i''(\zeta)$, $\dots$ of degree at most $n-2$.
Taking the limit of \eqref{eq:matching} as $s \to \infty$ shows $\underset{s \to \infty}{\lim} P_i(s,a_{ij})=0$. A polynomial that satisfies $P_i(a_{ij})=0$ is either a multiple of the polynomial $A_i(\zeta)=\underset{j \neq i}{\prod} (\zeta-a_{ij})$ or it is identically zero. Since $A_i(\zeta)$ has degree exceeding $n-2$, we conclude that $\underset{s \to \infty}{\lim} P_i(s,\zeta)=0$.
It follows that the perturbation expansion must begin
	\begin{align}
	P_i(s,\zeta)=e^{-s\Delta_i}g_i(\zeta)+e^{-s\Delta_i'}g'_i(\zeta)+e^{-s\Delta_i''}g_i''(\zeta)+\dots
	\end{align}
	with $0 < \Delta_i < \Delta_i'<\Delta_i''<\dots$ and $s$-independent polynomials $g_i(\zeta)$,$g_i'(\zeta)$, $g_i''(\zeta)$, $\dots$ of degree at most $n-2$.
	
	The expansion of the matching conditions $P_i(s,a_{ij})=e^{-sr_{ij}}P_j(s,a_{ij})$ at zero-order is the system
	\begin{align}
	g_{i}(a_{ij})=e^{-s(r_{ij}+\Delta_j-\Delta_i)}g_j(a_{ij}), \ \ \ \text{ for } i \neq j.
	\end{align} 
	If $g_{i}(a_{ij}) \neq 0$, then the $s$-independence of the perturbation polynomials implies $r_{ij}+\Delta_j - \Delta_i=0$. However, we make the observation that for a fixed pair of $i,j$ we cannot have both
	\begin{align}
	\Delta_j + r_{ij}-\Delta_i =0 \text{ and } \Delta_i +r_{ij}-\Delta_j =0
	\end{align}
	as that implies $r_{ij}=0$. Therefore, if $g_{i}(a_{ij}) \neq 0$, we must have $g_i(a_{ji})=0$ to avoid this case. This gives $n-1$ many points that $g_i(\zeta)$ must be zero at. However $g_i(\zeta)$ is degree at most $n-2$ so the solution to zero-order is $g_i(\zeta) \equiv 0$. 

By induction, subsequent orders are identical to the zero-order case. We conclude from the perturbation expansion of our section that 
\begin{align*}(P_1(s,\zeta),\dots,P_n(s,\zeta)) \equiv (0,\dots,0), \end{align*}
so that there is no nonzero global section of $L^s(n-2)$.
\end{proof}

\subsection{Inner Product Structure}
The reality conditions \eqref{eq: reality Nahm L M} imply that the frame $U$ of $H^0(\mathcal{C},L^s(n-1))$ for $s \in (0,\infty)$ must satisfy an additional condition of orthonormality with respect to an inner product on $H^0(\mathcal{C},L^s(n-1))$. We now review this inner product structure on $H^0(\mathcal{C},L^s(n-1))$. 

The antipodal map on $\mathbb{P}^1$ lifts to an antiholomorphic involution on $T\mathbb{P}^1$ given by 
\begin{align} \label{eq:real structure on TP1}
\mathcal{J}:(\zeta,\eta) \mapsto (-{1}/{\bar{\zeta}},-{\bar{\eta}}/{\bar{\zeta}^2}).
\end{align}
We have seen that the spectral curve $\mathcal{C} \subset T\mathbb{P}^1$ is invariant under \eqref{eq:real structure on TP1} so the curve has a real structure.
The involution induces an antiholomorphic involution $\sigma$ on the set of line bundles $\text{Pic}(\mathcal{C})$ where $\sigma$ acts on a section $u$ by
\begin{align}
\sigma(u(\zeta,\eta))= \overline{u(-1/\bar{\zeta},-\bar{\eta}/\bar{\zeta}^2)}.
\end{align}

The inner product $\langle \ , \ \rangle$ on $H^0(\mathcal{C},L^s(n-1))$ introduced by Hitchin  \cite[pp.179-181]{hitchin_construction_1983} arises in the following manner. Given sections $u,v \in H^0(\mathcal{C},L^s(n-1))$, then $u\sigma(v)$ is a section of $\mathcal{O}_\mathcal{C}(2n-2)$ and can be written uniquely in the North patch as
\begin{align} \label{eq:origin of inner product}
	u\sigma(v)^N=c_0(s)\eta^{n-1}+c_1(s,\zeta)\eta^{n-2}+\dots+c_{n-1}(s,\zeta),
\end{align}
for $c_i(s,\zeta)$ a degree $2i$ polynomial in $\zeta$.
The inner product is then defined by setting
\begin{align}
	\langle u, v \rangle:=c_0(s)
\end{align}
for $c_0(s)$ of \eqref{eq:origin of inner product}, which is independent of $\zeta$.

The $c_0$ term defining the inner product $\langle u,v \rangle$ on $H^0(\mathcal{C},L^s(n-1))$ may be written in terms of the corresponding row of polynomials \eqref{eq:QandURelationship} as in \cite[Eqn 4.8]{bielawski_reducible_2007}. Given $p=(P_1(\zeta),\dots,P_n(\zeta))$ and $r=(R_1(\zeta),\dots,R_n(\zeta))$ of polynomials satisfying \eqref{eq:PolynomialRowsBundle}, Lagrangian interpolation for \eqref{eq:origin of inner product} using the fiber value of $u\sigma(v)$ above $\zeta \in \mathbb{P}^1$ on each sheet $\mathcal{C}_i$ gives that
\begin{align} \label{eq:inner product polynomials}
\langle p, r \rangle=(-\zeta)^{n-1} \overset{n}{\underset{i=1}{\sum}} \frac{P_i(\zeta) \overline{R_i(-1/\bar{\zeta})}}{\underset{i \neq j}{\prod} (p_i(\zeta) - p_j(\zeta))}.
\end{align}

\subsection{The line bundles}
We conclude this section with some comments on the line bundles in our construction. For a nonsingular curve the
identification of the Picard group and the Jacobian leads to describing the flow of the integrable system in terms of a flow on the Jacobian. For the $su(2)$ charge $n$-monopole the
curve is smooth of genus $(n-1)^2$ and the Jacobian is (a complex torus) of this dimension. When a curve degenerates
the Jacobian is replaced by a generalized Jacobian, generically a product of $\mathbb{C}$'s,  $\mathbb{C}\sp\ast$'s and a Jacobian of smaller genus. For the case at hand the normalization of our curve is the sum of $n$ $\mathbb{P}\sp1$'s and line bundles are restricted by
the matching conditions (\ref{eq:PolynomialRowsBundle}). To get a line bundle on
our curve $\mathcal{C}$ we choose line bundles $\mathcal{L}_i$  on each $\mathbb{P}\sp1_i$ 
and then identify the fibre of  $\mathcal{L}_i$  at $p_i$ with the fibre of  $\mathcal{L}_j$  at $p_j$
This identification is only determined up to scaling by an element of $\mathbb{C}$. Now consider first the
case $n=2$. Here if we choose two different identifications, differing say by a scaling factor of $\lambda$, we get the same line bundle on $\mathcal{C}$ because we can apply the automorphism $1\times\lambda$
 to the line bundle $\mathcal{L}_1\coprod \mathcal{L}_2$  on 
 $\mathbb{P}\sp1_1\coprod\mathbb{P}\sp1_2$  and this will take the first identification to the second.
 For the general case we have the possibility of $n-1$ such scalings. Here there are $n(n-1)$ points
 of intersection and so we obtain $ n(n-1) -(n-1) =(n-1)^2$ worth of scalings. 
These scalings account for the different line bundles (our generalized Jacobian).

\section{From Frames to Nahm Solutions}

We have described the process of obtaining an orthonormal frame $U$ for $H^0(\mathcal{C},L^s(n-1))$ with $s \in (0,\infty)$ from a Nahm solution $(T_0,T_1,T_2,T_3)$. In this section, we describe the converse process of how we may associate to $U$
an appropriate Nahm solution $(T_0,T_1,T_2,T_3)$. This is achieved using the following proposition of Bielawski:
\begin{proposition}\cite[Prop 5.1]{bielawski_reducible_2007} \label{prop:BasisSectionsNahmCorrespondence} The set of orthonormal frames $U$ of $H^0(\mathcal{C},L^s(n-1))$ for $s \in (0,\infty)$ is in 1-1 correspondence with the set of solutions $(T_0,T_1,T_2,T_3)$ to Nahm's equations satisfying the boundary conditions 
\begin{align*}
\underset{s \to 0}{\lim} (sT_0,sT_1,sT_2,sT_3)=(0,\frac{i\rho_1}{2},\frac{i\rho_2}{2},\frac{i\rho_3}{2}), \underset{s \to \infty}{\lim} (T_0,T_1,T_2,T_3) \in \text{ad}_{U(n)}(0,i\tau_1,i\tau_2,i\tau_3),
\end{align*} with $\vec{\rho}$ some $n$ dimensional irreducible representation of $\mathfrak{su}(2)$. 
\end{proposition}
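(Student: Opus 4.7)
The plan is to prove the $1$--$1$ correspondence in both directions, with most of the work concentrated on the reverse direction since the forward map has essentially been constructed in Section~\ref{section:TheSpectralApproach}. From a Nahm solution one builds the Lax pair \eqref{eq:NahmLaxN}--\eqref{eq:NahmLaxS}, whose eigenvectors $U_j$ give the frame of $H^0(\mathcal{C},L^s(n-1))$ of Proposition~\ref{defn:EigenlineBundle}; the reality relation \eqref{eq: reality Nahm L M} then forces the Hitchin pairing \eqref{eq:inner product polynomials} to be positive definite, and a constant unitary change of basis (absorbed into a constant gauge on the $T_j$'s) produces an orthonormal $U$.

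For the reverse direction, given an orthonormal frame $U$ I would first \emph{define} $(L,M)$ in the $U$-basis by
\begin{equation*}
L^N(s,\zeta) := U^N \diag(p_1(\zeta),\ldots,p_n(\zeta)) (U^N)^{-1}, \qquad M^N(s,\zeta) := -\frac{dU^N}{ds}(U^N)^{-1},
\end{equation*}
and analogously in the South patch from $U^S$. Sections of $L^s(n-1)$ correspond via \eqref{eq:QandURelationship} to tuples of polynomials of degree at most $n-1$, and together with the fact that the $p_j$ are quadratic in $\zeta$, this forces the matrix entries of $L^N$ to be polynomials of degree at most two, of the form required by \eqref{eq:NahmLaxN}. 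A direct computation using the bundle transition \eqref{NSTF} yields $L^N = \zeta^2 L^S$ and $M^S = M^N + L^N/\zeta$, so \eqref{eq:transition function Nahm Lax} holds.

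The main obstacle is verifying the reality relations \eqref{eq: reality Nahm L M}. I would repackage orthonormality of $U$ under \eqref{eq:inner product polynomials} as a matrix identity $U^N(\zeta)^\dagger\, \Omega(\zeta)\, \overline{U^N(-1/\bar\zeta)} = I$, where $\Omega(\zeta)$ assembles the Lagrange interpolation weights on the $n$ sheets. Since the eigenvalues on each sheet satisfy $\overline{p_j(-1/\bar\zeta)} = -p_j(\zeta)/\zeta^2$, multiplying the identity on the right by $\diag(p_j(\zeta))$ and rearranging yields $L^N(-1/\bar\zeta)^\dagger = -L^S(\zeta)$; differentiating the same identity in $s$ and inserting the definition of $M^{N,S}$ produces the corresponding relation for $M$. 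Once \eqref{eq:transition function Nahm Lax} and \eqref{eq: reality Nahm L M} are in place, the converse statement recorded after \eqref{eq: reality Nahm L M} supplies antihermitian matrices $(T_0,T_1,T_2,T_3)$ solving Nahm's equations.

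It remains to extract the boundary conditions and bijectivity. As $s\to\infty$, the factors $e^{-sr_{ij}}$ in \eqref{eq:PolynomialRowsBundle} vanish, so sections decouple sheet by sheet: $U$ becomes block-diagonal up to unitary conjugation and $L^N$ approaches $g_0\diag(p_j(\zeta))g_0^{-1}$, which gives the required limit $\mathrm{Ad}(g_0)(i\tau_1,i\tau_2,i\tau_3)$. As $s\to 0$ the matching factors tend to $1$, so sections become polynomial tuples agreeing at every $a_{ij}$; combined with the dimension count of Proposition~\ref{prop: dimension of sections} and the normal form \eqref{gaugeLM}, this forces $sT_j \to i\rho_j/2$ with $\rho_j$ an irreducible $n$-dimensional $\mathfrak{su}(2)$-representation. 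Finally, two orthonormal frames yielding identical $(T_0,\ldots,T_3)$ can only differ by a constant diagonal unitary compatible with the asymptotic normalization, which is absorbed into the choice of $g_0 \in U(n)$, giving the claimed bijection.
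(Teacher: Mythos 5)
You take the same overall route as the paper does around this statement (the paper itself quotes the proposition from Bielawski and then carries out the frame-to-Nahm construction in the following section), but there is a genuine gap at the central analytic step of your reverse direction. You assert that, because sections correspond via \eqref{eq:QandURelationship} to polynomial tuples of degree at most $n-1$ and the $p_j$ are quadratic, the entries of $L^N=U^N\diag(p_1,\dots,p_n)(U^N)^{-1}$ are ``forced'' to be quadratic polynomials. That inference does not follow: $\det U^N$ (equivalently $\det Q^N$) vanishes at the double points $\zeta=a_{ij}$ of $\mathcal{C}$, so $(U^N)^{-1}$ is meromorphic there and a priori $L^N$ and $M^N$ are only rational in $\zeta$. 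Establishing holomorphy is the real content of this step, and it is exactly what Lemmas \ref{lemma: L quadratic} and \ref{lemma: M linear} supply: the apparent poles at $\zeta=a_{ij}$ cancel because the matching conditions \eqref{eq:PolynomialRowsBundle}, together with $-1/\bar{a}_{ij}=a_{ji}$ and $r_{ij}=h_j^+(a_{ij})-h_i^+(a_{ij})$, equalize the residue numerators coming from the $i$-th and $j$-th sheets, and the Lagrange identity \eqref{laginterp} then kills the sum. Only after holomorphy in both patches is known does the transition relation \eqref{eq:transition function Nahm Lax} yield that $L^N$ is quadratic and $M^N$ linear. Your degree-counting shortcut skips precisely the point the paper flags as ``not immediately obvious''; by contrast, your reality argument via the orthonormality identity (essentially \eqref{QNinv}) is sound and matches the paper's ``simple exercise'' remark.

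The $s\to0$ analysis is also too thin and partly circular: you invoke the normal form \eqref{gaugeLM} to force $sT_j\to i\rho_j/2$ with $\rho$ irreducible, but \eqref{gaugeLM} was derived in the forward direction under the assumption that the Nahm data already has the prescribed simple pole, so it cannot be used to establish that behaviour for data reconstructed from a frame. The paper argues differently: at $s=0$ every section of $\mathcal{O}_{\mathcal{C}}(n-1)$ is $r(\zeta)(1,\dots,1)$ (Hitchin's result, reproved after Proposition \ref{prop:pedestrian orthogonal basis}), hence $Q^N(0,\zeta)$ is singular and $(L,M)$ acquire a pole at $s=0$; the Lax equation forces the pole to be simple with residue a representation of $\mathfrak{su}(2)$, and irreducibility is Hitchin's Eq.\ (5.17). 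Finally, your uniqueness remark needs sharpening: right multiplication of a frame by a nontrivial constant diagonal unitary is generally incompatible with the matching conditions, while the residual freedom (an overall phase, and left multiplication by constant unitaries, which conjugates the $T_j$) is exactly what Bielawski's normalization controls; saying it is ``absorbed into $g_0$'' does not by itself produce a bijection.
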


Now, suppose we are given an orthonormal frame $U$ of $H^0(\mathcal{C},L^s(n-1))$. Define the Lax pair $(L,M)$ in the North patch by
\begin{align} \label{eq:Nahm Lax pair from U}
\begin{split}
L^N(s,\zeta)&:=U^N(s,\zeta)\left( \begin{smallmatrix} p_1(\zeta) & & & \\ & p_2(\zeta) & & \\ & & \ddots & \\ & & & p_n(\zeta) \end{smallmatrix} \right) U^N(s,\zeta)^{-1}, \\  M^N(s,\zeta)&:=-\frac{dU^N(s,\zeta)}{ds}U^N(s,\zeta)^{-1},
\end{split}
\end{align}
and in the South patch by
\begin{align*}
\begin{split}
L^S(s,\frac{1}{\zeta})&:=U^S(s,\frac{1}{\zeta})\left( \begin{smallmatrix} \frac{p_1(\zeta)}{\zeta^2} & & & \\ & \frac{p_2(\zeta)}{\zeta^2} & & \\ & & \ddots & \\ & & & \frac{p_n(\zeta)}{\zeta^2} \end{smallmatrix} \right)U^S(s,\frac{1}{\zeta})^{-1}, \\ M^S(s,\frac{1}{\zeta})&:=-\frac{dU^S(s,\frac{1}{\zeta})}{ds}U^S(s,\frac{1}{\zeta})^{-1},
\end{split}
\end{align*}
with each $p_{j}(\zeta)=x_j^1+ix_j^2 - 2x_j^3 \zeta - (x_j^1-ix_j^2)\zeta^2$ determined by the configuration of points $x_j=(x_j^1,x_j^2,x_j^3) \in \mathbb{R}^3$ arising from the boundary conditions as $s \to \infty$.
From the transition function of $U$ given in Proposition \ref{defn:EigenlineBundle}, it follows that the Lax pair $(L,M)$ satisfies the desired transition functions \eqref{eq:transition function Nahm Lax}. It is also a simple exercise to see that orthonormality of $U$ implies $(L,M)$ satisfies the reality conditions of \eqref{eq: reality Nahm L M}.

 We shall now give an elementary proof that the Lax pair $(L,M)$ is holomorphic in $\zeta$; this is not immediately obvious because the columns of $U$ fail to be linearly independent at the double points $\zeta=a_{ij}$ of the spectral curve, which means the inverse of $U$ is meromorphic in $\zeta$.
 Moreover we show that $L$ is quadratic in $\zeta$ and $M$ is linear in $\zeta$ from which it follows that $(L,M)$ may be written in the form \eqref{eq:NahmLaxN} with $(T_0,T_1,T_2,T_3)$ satisfying Nahm's equations.

Instead of the frame $U$, it is more convenient to work with the corresponding matrix of polynomials $Q(s,\zeta)$ defined by \eqref{eq:QandURelationship}. $Q(s,\zeta)$ is orthonormal with respect to the inner product \eqref{eq:inner product polynomials} so the inverse $Q^N(s,\zeta)^{-1}$ is
\begin{align}\label{QNinv}
\footnotesize Q^N(s,\zeta)^{-1}=(-\zeta)^{n-1}\left( \begin{smallmatrix} {\overset{n}{\underset{j=2}{\prod}} (p_1(\zeta)-p_j(\zeta))^{-1}} &  & \\  &  \ddots & \\ & & & {\overset{n-1}{\underset{j=1}{\prod}} (p_n(\zeta)-p_j(\zeta))^{-1}} \end{smallmatrix} \right)Q^N(s,-1/\bar{\zeta})^\dagger,
\end{align}
where $\dagger$ denotes the complex conjugate transpose.


\begin{lemma} \label{lemma: L quadratic} $L^N(s,\zeta)$ defined in Equation \eqref{eq:Nahm Lax pair from U} is a quadratic polynomial in $\zeta$. 
\end{lemma}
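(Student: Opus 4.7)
The plan is to work with the polynomial matrix $Q^N$ in place of $U^N$, exploit the explicit inversion formula \eqref{QNinv}, and split the argument into two parts: holomorphy of $L^N$ on the North patch $\mathbb{C}$, and at-most-quadratic growth at $\zeta = \infty$.

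First I would simplify the defining expression for $L^N$. Since $Q^N = U^N \diag(e^{sh_j^+(\zeta)})$ and the diagonal exponentials commute with $\diag(p_j(\zeta))$, conjugation yields
\begin{align*}
L^N(s,\zeta) = Q^N(s,\zeta) \diag(p_j(\zeta)) \, Q^N(s,\zeta)^{-1},
\end{align*}
and substituting \eqref{QNinv} gives the entry-wise formula
\begin{align*}
L^N_{ij}(s,\zeta) = (-\zeta)^{n-1} \sum_{k=1}^n \frac{Q^N_{ik}(\zeta) \, p_k(\zeta) \, \overline{Q^N_{jk}(-1/\bar\zeta)}}{\prod_{m \neq k}\bigl(p_k(\zeta) - p_m(\zeta)\bigr)}.
\end{align*}
A degree count handles growth at infinity: the numerator of each summand is a polynomial of degree at most $2n$ in $\zeta$ (the factor $\zeta^{n-1}$ absorbs the pole of $\overline{Q^N_{jk}(-1/\bar\zeta)}$ at infinity), while the denominator has degree $2(n-1)$, giving $O(\zeta^2)$ growth.

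The main work is showing the sum has no finite poles. The factor $p_k - p_m$ vanishes exactly at the two double points $a_{km}$ and $a_{mk}$, so the $k$-th summand has simple poles at $a_{kl}, a_{lk}$ for each $l \neq k$, and at a fixed double point $\zeta = a_{kl}$ only the $k$-th and $l$-th summands are singular. Using $p_k(a_{kl}) = p_l(a_{kl})$ together with $\prod_{m \neq k,l}(p_k - p_m)(a_{kl}) = \prod_{m \neq k,l}(p_l - p_m)(a_{kl})$, residue cancellation between the two summands reduces to the identity
\begin{align*}
Q^N_{ik}(a_{kl}) \, \overline{Q^N_{jk}(a_{lk})} = Q^N_{il}(a_{kl}) \, \overline{Q^N_{jl}(a_{lk})},
\end{align*}
where I have used $-1/\overline{a_{kl}} = a_{lk}$. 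The hard part of the proof is this identity: it requires simultaneous application of the matching condition \eqref{eq:PolynomialRowsBundle} to row $i$ at $\zeta = a_{kl}$ and to row $j$ at the antipodal point $\zeta = a_{lk}$ --- a pairing built into the inversion formula \eqref{QNinv} through the antiholomorphic involution. The two resulting factors $e^{\pm s r_{kl}}$ cancel because $r_{kl}$ is real. Once holomorphy on $\mathbb{C}$ and the degree bound at infinity are established, each entry of $L^N$ is a polynomial of degree at most two in $\zeta$.
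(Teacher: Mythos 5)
Your proposal is correct and follows essentially the same route as the paper: the entrywise formula obtained from \eqref{QNinv}, cancellation of the potential simple poles at the double points via the matching conditions \eqref{eq:PolynomialRowsBundle} applied at the antipodal pair $a_{kl}$, $a_{lk}$ (the real factors $e^{\mp s r_{kl}}$ cancelling), and a growth bound forcing each entry to be quadratic. The only cosmetic differences are that the paper gets the quadratic bound by showing $L^S$ is holomorphic in $1/\zeta$ and invoking the transition relation \eqref{eq:transition function Nahm Lax} rather than your direct degree count, and it packages the residue cancellation via the Lagrange identity \eqref{laginterp} instead of computing residues explicitly; also note that the pole of $\overline{Q^N_{jk}(-1/\bar\zeta)}$ absorbed by the factor $(-\zeta)^{n-1}$ sits at $\zeta=0$, not at infinity, though this slip does not affect your degree count.
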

\begin{proof}	
From (\ref{eq:Nahm Lax pair from U}), (\ref{QNinv}) the $uv$-th entry of $L^N$ is given by 
\begin{align}
L^N_{uv}=\sum_{s=1}^n p_s(\zeta)\frac{ (-\zeta)^{n-1}Q_{us}(\zeta)\overline{Q_{vs}(-1/\bar{\zeta})}}{\overset{n}{\underset{l \neq s}{\prod}}(p_s(\zeta) - p_l(\zeta))}.
\end{align}
The possible poles of $L_{uv}^N$ at $\zeta=a_{ij}$ arise from the two terms
 \begin{align} \label{eq:possible pole L} p_i(\zeta) \frac{(-\zeta)^{n-1}Q_{ui}(\zeta)\overline{Q_{vi}(-1/\bar{\zeta})}}{\underset{ l \neq i}{\prod}(p_i(\zeta) - p_l(\zeta))}+p_j(\zeta) \frac{(-\zeta)^{n-1}Q_{uj}(\zeta)\overline{Q_{vj}(-1/\bar{\zeta})}}{\underset{l \neq j}{\prod}(p_j(\zeta) - p_l(\zeta))}.
\end{align}
We have $p_i(a_{ij})=p_j(a_{ij})$, and as  $-1/\overline{a_{ij}}=a_{ji}$, the matching conditions \eqref{eq:PolynomialRowsBundle} imply $Q_{ui}(a_{ij})\overline{Q_{vi}(a_{ji})}=Q_{uj}(a_{ij})\overline{Q_{vj}(a_{ji})}$ so that we may factor \eqref{eq:possible pole L} at $\zeta=a_{ij}$ as
\begin{align}
p_i(a_{ij})(-a_{ij})^{n-1}Q_{ui}(a_{ij})\overline{Q_{vi}(a_{ji})} \bigg(  \frac{1}{\underset{ l \neq i}{\prod}(p_i(\zeta) - p_l(\zeta))}+ \frac{1}{\underset{l \neq j}{\prod}(p_j(\zeta) - p_l(\zeta))} \bigg) \bigg|_{\zeta=a_{ij}}.
\end{align}

The latter factor appears as the only possible poles of $\zeta=a_{ij}$.
Lagrange interpolation yields
\begin{align}\label{laginterp}
\sum_{i=1}^n \frac{1}{\underset{j \neq i}{\prod}(p_i(\zeta)-p_j(\zeta))}=0.
\end{align}
Since this sum is equal to zero, there cannot be a pole at $\zeta=a_{ij}$; consequently $L^N_{uv}$ is holomorphic in $\zeta$. Similarly, $L^S_{uv}$ is holomorphic in $1/\zeta$. The transition relation \eqref{eq:transition function Nahm Lax} of $L$ implies $L^N_{uv}$ is a quadratic polynomial in $\zeta$.
\end{proof}

\begin{lemma} \label{lemma: M linear} $M^N(s,\zeta)$ is a linear function in $\zeta$.
\end{lemma}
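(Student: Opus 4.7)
The plan is to run the same strategy as Lemma \ref{lemma: L quadratic}, now applied to $M^N$. Using the relation $U^N=Q^N\diag(e^{-sh_j^+(\zeta)})$ from \eqref{eq:QandURelationship}, a direct differentiation yields
\begin{align*}
M^N(s,\zeta)=-\dot{U}^N(U^N)^{-1}=-\dot{Q}^N(Q^N)^{-1}+Q^N\diag(h_j^+(\zeta))(Q^N)^{-1}.
\end{align*}
Inserting \eqref{QNinv}, the $(u,v)$-entry reads
\begin{align*}
M^N_{uv}=\sum_{k=1}^n X_{uk}(\zeta)\,\frac{(-\zeta)^{n-1}\overline{Q_{vk}(-1/\bar\zeta)}}{\prod_{l\neq k}(p_k(\zeta)-p_l(\zeta))},\qquad X_{uk}(\zeta):=h_k^+(\zeta)Q_{uk}(\zeta)-\dot{Q}_{uk}(\zeta).
\end{align*}
The possible poles at any double point $\zeta=a_{ij}$ arise only from the $k=i$ and $k=j$ summands.

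The key algebraic step is to show that $X$ satisfies the same matching as $Q$. Differentiating the matching $Q_{ui}(a_{ij})=e^{-sr_{ij}}Q_{uj}(a_{ij})$ in $s$ and invoking the identity $h_i^+(a_{ij})-h_j^+(a_{ij})=-r_{ij}$ recorded after \eqref{eq:QandURelationship}, I obtain
\begin{align*}
X_{ui}(a_{ij})=e^{-sr_{ij}}X_{uj}(a_{ij}).
\end{align*}
Together with $\overline{Q_{vi}(a_{ji})}=e^{sr_{ij}}\overline{Q_{vj}(a_{ji})}$, which follows from the matching at the conjugate double point $\zeta=a_{ji}=-1/\overline{a_{ij}}$, the numerator factor $X_{uk}(\zeta)(-\zeta)^{n-1}\overline{Q_{vk}(-1/\bar\zeta)}$ takes the same value at $\zeta=a_{ij}$ for $k=i$ and $k=j$. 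From here the cancellation mirrors Lemma \ref{lemma: L quadratic}: by the Lagrange interpolation identity \eqref{laginterp}, the combination $\frac{1}{\prod_{l\neq i}(p_i-p_l)}+\frac{1}{\prod_{l\neq j}(p_j-p_l)}$ is regular at $a_{ij}$, and the first-order discrepancy between the $i$- and $j$-numerators kills whatever simple pole remains. Hence $M^N_{uv}$ is holomorphic on $\mathbb{C}$.

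Finally, the same argument applied in the South patch shows that $M^S_{uv}$ is holomorphic in $1/\zeta$ at $\zeta=\infty$. The transition relation $M^S=M^N+\frac{1}{\zeta}L^N$ from \eqref{eq:transition function Nahm Lax}, together with Lemma \ref{lemma: L quadratic}, then forces $M^N=M^S-L^N/\zeta$ to have at most linear growth as $\zeta\to\infty$; being an entire rational function of $\zeta$ with such growth, it is a polynomial of degree at most one. The main obstacle is verifying the matching identity for $X_{uk}$; once that algebraic fact is in hand, the pole-cancellation mechanism of Lemma \ref{lemma: L quadratic} carries over without modification.
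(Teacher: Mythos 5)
Your proof is correct and follows essentially the same route as the paper: the same entrywise formula for $M^N$ via \eqref{QNinv}, the same matching of the numerators $-\dot{Q}+h^+Q$ at the double points using $h_i^+(a_{ij})-h_j^+(a_{ij})=-r_{ij}$, the same Lagrange-interpolation cancellation \eqref{laginterp}, and the same use of the transition function together with Lemma \ref{lemma: L quadratic} to conclude linearity. The only difference is presentational: you spell out the $s$-derivative of the matching condition to verify that $X_{uk}=h_k^+Q_{uk}-\dot{Q}_{uk}$ inherits the matching, a step the paper leaves implicit.
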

\begin{proof}
The $uv$-th entry of $M^N$ is similarly given by
\begin{align}
M^N_{uv}=\sum_{s=1}^{n} \frac{\bigg(-\dot{Q}_{us}(\zeta)+Q_{us}(\zeta)h^+_{s}(\zeta)\bigg)(-\zeta)^{n-1}\overline{Q_{vs}(-1/\overline{\zeta})}}{\overset{n}{\underset{l\neq s}{\prod}} (p_s(\zeta) - p_l(\zeta))}.
\end{align}

The possible poles of $M^N_{uv}$ are at $\zeta=a_{ij}$, arising from the two terms
\begin{multline}\label{eq:possible pole M}
\frac{\left(-\dot{Q}_{ui}(\zeta)+Q_{ui}(\zeta)h^+_{i}(\zeta)\right)(-\zeta)^{n-1}\overline{Q_{vi}(-1/\overline{\zeta})}}{\underset{l \neq i}{\prod} (p_i - p_l)} \\ +\frac{\big(-\dot{Q}_{uj}(\zeta)+Q_{uj}(\zeta)h^+_{j}(\zeta)\big)(-\zeta)^{n-1}\overline{Q_{vj}(-1/\overline{\zeta})}}{\underset{l \neq j}{\prod} (p_j - p_l)}.
\end{multline}
We have $r_{ij}=h_j^+(a_{ij})-h_i^+(a_{ij})$ as is easy to check, so the matching conditions \eqref{eq:PolynomialRowsBundle} again make the numerators equal when $\zeta=a_{ij}$, and we may factor \eqref{eq:possible pole M} at $\zeta=a_{ij}$ as
\begin{align}
\begin{split}
\bigg(-\dot{Q}_{ui}(a_{ij})+Q_{ui}(a_{ij})& h_i^+(a_{ij})\bigg)(-a_{ij})^{n-1} \\ &\times \overline{Q_{vi}(a_{ji})}\bigg(  \frac{1}{\underset{ l \neq i}{\prod}(p_i - p_l)}+ \frac{1}{\underset{l \neq j}{\prod}(p_j - p_l)} \bigg) \bigg|_{\zeta=a_{ij}}.
\end{split}
\end{align}
The latter factor appears as the only possible poles of $\zeta=a_{ij}$ and again the Lagrange interpolation (\ref{laginterp})
shows this factor cannot contain any poles at $\zeta=a_{ij}$.

We conclude $M^N_{uv}$ is holomorphic in $\zeta$, and similarly $M^S_{uv}$ is holomorphic in $1/\zeta$. The transition function \eqref{eq:transition function Nahm Lax} of $M$ implies $M^N_{uv}$ is a linear function of $\zeta$.
\end{proof}

We address the boundary behavior of $(L,M)$ as $s$ approaches $0$. At $s=0$, the matching conditions \eqref{eq:PolynomialRowsBundle} at the double points $a_{ij}$ of the spectral curve become
\begin{align}
Q_{i}(a_{ij})=Q_j(a_{ij}).
\end{align}
This is a section $q=(Q_1(\zeta),\ldots,Q_n(\zeta))$ of the pullback of $\mathcal{O}_{\mathbb{P}^1}(n-1)$ to the spectral curve $\mathcal{C}$, which we denote as $\mathcal{O}_\mathcal{C}(n-1)$; each $Q_i(\zeta)$ has degree
$\leq n-1$.
Hitchin \cite[4.5,5.1]{hitchin_construction_1983} showed quite generally that
each section of $\mathcal{O}_\mathcal{C}(n-1)$ is $r(\zeta)(1, 1,\ldots ,1)$ for some polynomial $r(\zeta)$ of degree $\leq n-1$; in the next section we will give a simple proof of this for our situation. Since these rows are $\mathbb{C}[\zeta]$ multiples of the same row, any $n \times n$ matrix with these rows is not invertible. 
Thus, at $s=0$, the $n \times n$ matrix $Q^N(s,\zeta)$ fails to be invertible and so $L$ and $M$ have a pole at $s=0$. The Lax equation of $(L,M)$ shows this pole must be a simple pole and the residue must give a representation of $\mathfrak{su}(2)$. By \cite[Eq.(5.17)]{hitchin_construction_1983}, this representation is the rank $n$ maximal representation of $\mathfrak{su}(2)$.

\section{Construction of Polynomials} \label{section:construction of polynomials}
To construct a solution $(T_0,T_1,T_2,T_3)$ to Nahm's equations, we seek the unique orthogonal basis for the space of $n$-tuplets \begin{align*}q_l=\left(Q_1\sp{(l)}(\zeta),Q_2\sp{(l)}(\zeta),\dots,Q_n\sp{(l)}(\zeta)\right) \end{align*} of degree $\leq n-1$ polynomials satisfying the following \hypertarget{PolynomialConditions}{conditions}:
\begin{enumerate} \label{list:PolynomialConditions}
\item[A1.] At each of the double points $a_{ij}$ of the spectral curve the polynomials satisfy the matching conditions $Q_i\sp{(l)}(a_{ij}) = e^{-sr_{ij}}Q_j\sp{(l)}(a_{ij})$. 
\item[A2.] The $l$-th basis element $q_l=(Q_1\sp{(l)}(\zeta),Q_2\sp{(l)}(\zeta),\dots,Q_n\sp{(l)}(\zeta))$ has
\begin{itemize}
\item
$Q_i\sp{(l)}(\infty)=0$, i.e. $\deg Q_i\sp{(l)}(\zeta) <n-1$ for $i <l$,

\item $ Q_i\sp{(l)}(0)=0$, for $i >l$,
\item $Q_l\sp{(l)}(\zeta)$ monic of degree $n-1$.
\end{itemize}

\end{enumerate}

This choice of orthogonal basis was proposed by Bielawski \cite[Proposition 2.2]{bielawski_reducible_2007}. Here, $q_l$ vanishes at the points above the fixed point $0 \in \mathbb{P}^1$ for all sheets before the $\mathcal{C}_l$ sheet, then vanishes at the points above the antipodal point $\infty \in \mathbb{P}^1$ for all sheets after $\mathcal{C}_l$. One may see from the inner product \eqref{eq:inner product polynomials}, which is independent of $\zeta$, that the basis is indeed orthogonal.

The Nahm solution constructed from \hyperlink{PolynomialConditions}{(A)} satisfies the boundary conditions
\begin{align} \label{eq:NahmBddConstructed}
\lim_{s \to 0} sT_j(s)=\frac{i\rho_j}{2}, 
 \ \ \ \ \ \ \lim_{s \to \infty} (T_1(s),T_2(s),T_3(s))= (i\tau_1,i\tau_2,i\tau_3),
\end{align} 
where $\rho$ is some $n$-dimensional irreducible representation of $SU(2)$. This solution differs from $(T_0,T_1,T_2,T_3)$ satisfying \eqref{eq:NahmBddConditionNoT0} by only a constant gauge transformation $g_0$.

In this section, we introduce two linear systems for constructing \hyperlink{PolynomialConditions}{(A)}. We then introduce a method to obtain its perturbative expansion for large $s$.

\subsection{A Direct Approach}

There is a direct approach to the linear system of polynomials satisfying the matching conditions.
For a row of polynomials $(Q_1,\dots,Q_n)$ (we drop the row index $l$ for ease of notation) we can simply take as unknowns all the coefficients 
$\{q_{ij}: \ i=0,\dots,n-1, j=1,\dots,n\} $ of  \begin{align*}Q_1(\zeta)&=q_{01}+q_{11}\zeta+\dots+q_{(n-1)1}\zeta^{n-1}, \\ \vdots \\ Q_n(\zeta)&=q_{0n}+q_{1n}\zeta+\dots+q_{(n-1)n}\zeta^{n-1}, \end{align*}
and subject $\{q_{ij}\}$ to the conditions \hyperlink{PolynomialConditions}{(A)}.
For each such row we may form the $n^2\times 1$ column vector 
$\mathfrak{q}:=(q_{01},q_{02},\dots,q_{0n},\dots,q_{(n-1)1},q_{(n-1)n},\dots, q_{(n-1)n})^T $
with  all of the unknown coefficients. Then
$[(1,\zeta,\ldots,\zeta\sp{n-1})\otimes \hat e_i]\mathfrak{q}=Q_i(\zeta) $
where $\hat e_i$ is the $1\times n$ row vector with a $1$ in the $i$-th place and
zero otherwise. The matching condition
$Q_i(a_{ij})=e^{-sr_{ij}}Q_j(a_{ij})$ then
takes the form 
$$[ (1,a_{ij},a_{ij}^2,\ldots,a_{ij}^{n-1})\otimes (e^{sr_{ij}/2}\hat{e}_i - e^{-sr_{ij}/2}\hat{e}_j)]\,\mathfrak{q}=0.$$
For each of the $n$ basis elements $q_l$ we may construct the
corresponding column vectors $\mathfrak{q}_l$ and assemble these columns to form a ${n^2 \times n}$ matrix
$\mathcal{P}$ and the matching conditions 
then yield the linear system
\begin{align} \label{eq:pedestrian linear system}
\Xi\, \mathcal{P} =0,\end{align}
where $\Xi$ is the $n(n-1)\times n^2$ matrix whose rows, labelled by $(ij)$,
are $$(1,a_{ij},a_{ij}^2,\ldots,a_{ij}^{n-1})\otimes (e^{sr_{ij}/2}\hat{e}_i - e^{-sr_{ij}/2}\hat{e}_j).$$

\begin{proposition} \label{prop:pedestrian orthogonal basis} 
The orthogonal basis $\{q_1,\dots,q_n \}$ for $s \in (0,\infty)$, with $$q_l=\left(Q_1\sp{(l)}(\zeta),Q_2\sp{(l)}(\zeta),\ldots,Q_n\sp{(l)}(\zeta)\right)$$ for $Q_i\sp{(l)}(\zeta)$ a monic polynomial of maximal degree $n-1$ and satisfying the vanishing conditions $Q_i\sp{(l)}(\infty)=0$ for $i <l$ and $Q_i\sp{(l)}(0)=0$ for $i >l$, is constructible from the linear system \eqref{eq:pedestrian linear system}.
\end{proposition}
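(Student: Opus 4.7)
The plan is to encode conditions (A1) and (A2) directly as the matrix equation $\Xi\mathcal{P}=0$ together with a prescribed set of normalizations, verify that the resulting system is square, and then argue that it has a unique solution for every $s\in(0,\infty)$. Orthogonality of the constructed basis is automatic from the inner product formula \eqref{eq:inner product polynomials} once uniqueness is in hand.

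The encoding step is essentially bookkeeping. For each $l$ the $n^2$ coefficients of $q_l$ are packaged as the column $\mathfrak{q}_l$. Condition (A2) fixes $n$ of these entries: $q_{(n-1)i}^{(l)}=0$ for $i<l$ (the $l-1$ vanishings $Q_i^{(l)}(\infty)=0$), $q_{0i}^{(l)}=0$ for $i>l$ (the $n-l$ vanishings $Q_i^{(l)}(0)=0$), and the monic normalization $q_{(n-1)l}^{(l)}=1$. The remaining $n(n-1)$ coefficients are constrained by (A1), one equation per ordered pair $(i,j)$ with $i\neq j$: evaluating $Q_i^{(l)}(a_{ij})=e^{-sr_{ij}}Q_j^{(l)}(a_{ij})$ and rescaling by $e^{sr_{ij}/2}$ produces precisely the row $(1,a_{ij},\ldots,a_{ij}^{n-1})\otimes(e^{sr_{ij}/2}\hat{e}_i-e^{-sr_{ij}/2}\hat{e}_j)$ of $\Xi$ applied to $\mathfrak{q}_l$. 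Assembling the $\mathfrak{q}_l$ as columns of $\mathcal{P}$ yields $\Xi\mathcal{P}=0$ together with the normalizations, and the $n(n-1)$ equations balance the $n(n-1)$ remaining unknowns.

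For uniqueness, Proposition \ref{prop: dimension of sections} identifies the set of tuples satisfying (A1) with the $n$-dimensional space $H^0(\mathcal{C},L^s(n-1))$. The $n-1$ homogeneous linear functionals in (A2) together with the leading-coefficient functional $q\mapsto[\zeta^{n-1}]Q_l$ must be linearly independent on this space; equivalently, no non-zero $q\in H^0(\mathcal{C},L^s(n-1))$ has $Q_i(\infty)=0$ for $i\leq l$ and $Q_i(0)=0$ for $i>l$. I would adapt the perturbative argument of Proposition \ref{prop: dimension of sections}: write each component as a formal series in $e^{-s\Delta}$; at leading order the matching forces every $g_i$ to vanish at one of $\{a_{ij},a_{ji}\}$ for each $j\neq i$, giving $n-1$ zeros. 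For $i\leq l$ the extra condition $Q_i(\infty)=0$ forces $\deg g_i\leq n-2$, so these $n-1$ zeros kill $g_i$; for $i>l$ the extra condition $Q_i(0)=0$ provides a zero at $\zeta=0$ in addition to the $n-1$ matching-induced zeros, giving $n$ zeros for a polynomial of degree $\leq n-1$, so again $g_i\equiv 0$. Induction on the order of the expansion then kills $q$. This establishes non-degeneracy for large $s$; the extension to all $s\in(0,\infty)$ is the main obstacle and follows by analyticity of the determinant of the square system in the exponentials $e^{-sr_{ij}}$ combined with Bielawski's Proposition 2.2 of \cite{bielawski_reducible_2007}, which guarantees existence of the orthogonal basis at every $s$ and thereby forces the determinant to be non-zero throughout. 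Orthogonality of the resulting basis is then immediate from \eqref{eq:inner product polynomials}: its value (being independent of $\zeta$) may be computed in the limit $\zeta\to 0$, reducing the sum to $\sum_i q_{0i}^{(l)}\overline{q_{(n-1)i}^{(m)}}/\prod_{j\neq i}(z_i-z_j)$, whose non-zero terms require both $i\leq l$ and $i\geq m$ and hence vanish for $l<m$.
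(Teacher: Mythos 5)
Your encoding and per-column counting are fine, and your reformulation is correct: existence and uniqueness of $q_l$ is equivalent to the statement that no non-zero section of $L^s(n-1)$ vanishes at the point over $\infty$ on the sheets $i\le l$ and at the point over $0$ on the sheets $i>l$. But be aware that this is a genuinely different invertibility input from the paper's. The paper solves all $n$ columns simultaneously, writing $\Xi\,\mathcal{P}=0$ in the block form $\begin{pmatrix}a& B& c\end{pmatrix}\begin{pmatrix}U\\ \tilde P\\ \hat L\end{pmatrix}=0$, and the only nondegeneracy it needs is invertibility of the single, $l$-independent block $\begin{pmatrix}a&B\end{pmatrix}$, whose kernel is exactly $H^0(\mathcal{C},L^s(n-2))$ and hence is trivial for all $s\in(0,\infty)$ by Proposition \ref{prop: dimension of sections}; the triangular normalizations of (A2) are then recovered afterwards by a UL decomposition. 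Your $n$ kernels are instead spaces of sections of $l$-dependent twists of $L^s(n-1)$ (same degree as $L^s(n-2)$, but different line bundles), so Proposition \ref{prop: dimension of sections} cannot simply be cited, which is exactly why you need your additional perturbative argument.

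The genuine gap is your passage from large $s$ to all $s\in(0,\infty)$. Analyticity of the determinant (an exponential polynomial in $s$) only excludes zeros outside a discrete set; it does not rule out vanishing at a particular $s_0\in(0,\infty)$. And Bielawski's Proposition 2.2, as you invoke it, supplies \emph{existence} of the basis, which is compatible with a singular square matrix: an inhomogeneous square system can be solvable without the matrix being invertible. To force invertibility you need \emph{uniqueness} at every $s$ — and if you import existence and uniqueness of the basis wholesale from Bielawski, you are deducing constructibility from essentially the statement it is meant to underpin, and your perturbative analysis becomes redundant. The repair consistent with the paper's own standard is either to run your perturbation-expansion kernel argument exactly as in the proof of Proposition \ref{prop: dimension of sections}, where the expansion is used to conclude triviality of the kernel for all $s\in(0,\infty)$ and not merely asymptotically, or to follow the paper's route: one inversion of $\begin{pmatrix}a&B\end{pmatrix}$ justified by $h^0(\mathcal{C},L^s(n-2))=0$, followed by the UL decomposition. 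Finally, your orthogonality computation at $\zeta\to 0$ handles $l<m$ only; the case $l>m$ follows from the same computation at $\zeta\to\infty$ or from Hermitian symmetry of the form, and should be stated.
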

\begin{proof}
	Let $\mathcal{P}$ be the ${n^2 \times n}$ be the matrix of unknowns where all of the coefficients in $q_l$ are placed in column $\mathfrak{q}_l$ described above. The vanishing condition $Q\sp{(l)}_i(\infty)=0$, i.e. $Q\sp{(l)}_i(\zeta)$ has degree (at most) $n-2$, for $i<l$ and so the bottom $n$ rows of $\mathcal{P}$ is a lower triangular matrix $\hat{L}$; the monic condition means that $\hat L$ has $1$'s along the diagonal. The vanishing condition $Q\sp{(l)}_i(0)=0$ for $i>l$ makes the top $n$ rows of $\mathcal{P}$ an upper triangular matrix $U$. The linear system $\Xi\, \mathcal{P}=0$ of \eqref{eq:pedestrian linear system} thus takes the form
	\begin{align}
	\begin{pmatrix} a& B& c \end{pmatrix} \begin{pmatrix} U \\ \tilde{P} \\ \hat{L} \end{pmatrix}=0.
	\end{align}
	which, provided the matrix $\begin{pmatrix}a& B \end{pmatrix}$ is invertible, may be solved by writing,
	\begin{align*}
	\begin{pmatrix} a & B \end{pmatrix} \begin{pmatrix} U \\ \tilde{P} \end{pmatrix} &= -c \hat{L},  \quad
	\begin{pmatrix} a & B \end{pmatrix} \begin{pmatrix} U \\ \tilde{P} \end{pmatrix} \hat{L}^{-1}&=-c, \quad
	\begin{pmatrix} U\hat{L}^{-1} \\ \tilde{P}\hat{L}^{-1} \end{pmatrix} &=-\begin{pmatrix} a & B \end{pmatrix}^{-1} c.
	\end{align*}
	By taking the $UL$ decomposition \cite{trefethen1997numerical} of the top $n$ rows of $-\begin{pmatrix} a & B \end{pmatrix}^{-1} c$ one solves for the unknowns $U$ and $\hat{L}^{-1}$;  then applying $\hat{L}$ to the remaining rows of $-\begin{pmatrix} a & B \end{pmatrix}^{-1} c$ one solves for $\tilde{P}$.
	
	Finally the invertibility of the 
	$n(n-1)\times n(n-1)$ matrix
$$\begin{pmatrix}a& B \end{pmatrix}
=\left( (1,a_{ij},a_{ij}^2,\ldots,a_{ij}^{n-2})\otimes (e^{sr_{ij}/2}\hat{e}_i - e^{-sr_{ij}/2}\hat{e}_j) \right)$$ 
is equivalent to $h^0(S,L^s(n-2))=0$, for $s\in(0,\infty)$, which was proved in Proposition \ref{prop: dimension of sections}: this block of $\Xi$ encodes the matching conditions for the polynomials corresponding to sections of $L^s(n-2)$. We shall give an alternate description of the basis described here in Proposition \ref{prop:LP orthonormal basis}.
\end{proof}

At this stage we see Hitchin's result regarding the sections at $s=0$. The null space of the matrix $(a,B)\big|_{s=0}$ is spanned by the $(n-1)$ vectors of the form $(0,\ldots,0,1,0\ldots,0)\sp{T}\otimes(1,1,\ldots,1)\sp{T}$ and these yield  the sections $r(\zeta)(1,1,\ldots,1)$
of $\mathcal{O}_\mathcal{C}(n-1)$ of Hitchin.

\subsection{A Lagrange-Interpolation Approach} \label{subsection: Lagrangian Approach}
We may also use Lagrange interpolation to obtain an equivalent linear system for the construction of the collection of polynomials satisfying \hyperlink{PolynomialConditions}{(A)}. This method was used by Lamy-Poirier in \cite{lamy-poirier_dirac_2015} to impose algebraic conditions on the unknowns, and in this subsection, we further simplify the algebraic conditions to a linear system and prove its invertibility.  
The advantage of this approach is that the matrix of the linear system is smaller, having size $\frac12 {n(n-1)} \times \frac12 {n(n-1)}$, and the solution to the system is obtained through a single matrix inversion, rather than by matrix inversion followed by an UL decomposition.


The linear system is obtained by the following steps.
\begin{description}
    \item[Step 1]  Write the Lagrange interpolation polynomial for $Q_k(\zeta)$, $k=1,2,\dots,n$ in terms of the variables $Q_j(a_{kj})$ from the matching conditions.
    \item[Step 2] Rewrite the Lagrangian polynomials in terms of only the variables $Q_j(a_{kj})$ with $k<j$ for $j=2,3,\dots,n$.
    \item[Step 3] Evaluate the interpolation polynomials $Q_k(\zeta)$ at $a_{jk}$ for $j<k$ to obtain  $\frac{n(n-1)}{2}$ linear equations for the unknowns $Q_u(a_{vu})$ with $v<u$ for $u=2,3,\dots,n$.
\end{description}
We will now describe these steps in more detail, ending the subsection with the
$n=3$ example.

\medskip
\noindent{\bf Step 1:}
As in \cite{lamy-poirier_dirac_2015}, supposing the values $Q_j(a_{kj})$ for $j \neq k$ are fixed, then Lagrange interpolation for the degree $n-1$ polynomial $Q_k(\zeta)$ using the matching conditions $Q_k(a_{kj})=e^{-sr_{kj}}Q_j(a_{kj})$ for $j \neq k$ gives the degree $n-1$ polynomial
\begin{align} \label{eq:Lagrangian interpolation polynomial}
Q_k(\zeta)=C_k A_k(\zeta)+\sum_{j \neq k} e^{-sr_{kj}} Q_j(a_{kj}) \underset{l \neq k,j}{\prod}\frac{\zeta - a_{kl}}{a_{kj}-a_{kl}}, \ k=1,\dots,n.
\end{align}
Here we define
\begin{equation}\label{defA}
A_k(\zeta):=\overset{n}{\underset{j \neq k}{\prod}} (\zeta - a_{kj}).
\end{equation}
The value $C_k$ is some constant independent of $\zeta$, although we do not need to make it independent of $s$.

\medskip
\noindent{\bf Step 2:}
The interpolation polynomials \eqref{eq:Lagrangian interpolation polynomial} satisfy the matching conditions by construction and, altogether, they utilize the $n(n-1)$ many variables $Q_u(a_{vu})$, $u=1,\dots,n$ with both $v<u$ and $v>u$.  We may however rewrite \eqref{eq:Lagrangian interpolation polynomial} in terms of only $Q_u(a_{vu})$ with $v<u$ for $u=2,\dots,n$.
\begin{lemma}
The variables $Q_u(a_{vu})$ with $v>u$ in the Lagrange interpolation polynomials \eqref{eq:Lagrangian interpolation polynomial} may be expressed in terms of $Q_u(a_{vu})$ with $v <u$.
\end{lemma}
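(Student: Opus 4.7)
The plan is to argue by strong induction on the polynomial index $k$, leveraging a structural feature of the Lagrange formula \eqref{eq:Lagrangian interpolation polynomial}: it presents $Q_k(\zeta)$ as an explicit linear expression in the $n-1$ input values $\{Q_j(a_{kj}) : j \neq k\}$ together with the free leading-coefficient parameter $C_k$. Consequently, evaluating this expression at any point $\zeta = a_{vk}$ with $v \neq k$ exhibits the variable $Q_k(a_{vk})$ as an explicit linear combination of $\{Q_j(a_{kj}) : j \neq k\}$. The heart of the matter is that these input variables mix the good ($v < u$) and bad ($v > u$) types in a controlled way that can be unwound inductively.

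For the base case $k = 1$, every Lagrange input $Q_j(a_{1j})$ has polynomial index $j > 1$, which is strictly greater than the first subscript $1$ of the evaluation point; in the lemma's notation each input satisfies $v < u$ and is therefore already good. Evaluating the Lagrange expression at $\zeta = a_{v1}$ for each $v > 1$ thus expresses every bad variable $Q_1(a_{v1})$ directly in good variables and $C_1$.

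For the inductive step, assume the claim holds for all indices less than $k$, and examine the Lagrange inputs for $Q_k$. The inputs $Q_j(a_{kj})$ with $j > k$ are already good ($v < u$). Those with $j < k$ are bad, but they have polynomial index $j$ strictly less than $k$, and the first subscript of their evaluation point is $k > j$, so these are precisely bad variables associated to polynomial $Q_j$; by the inductive hypothesis applied at index $j$ each has already been rewritten as an expression in good variables and the constants $C_1, \ldots, C_{k-1}$. Substituting these rewrites produces $Q_k(\zeta)$ as an explicit function of good variables alone, and evaluating at $\zeta = a_{vk}$ for each $v > k$ yields the required representation of every bad variable whose polynomial index equals $k$.

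The argument is essentially bookkeeping in the index structure: everything in sight is linear, and the induction closes because the dependency graph is acyclic --- at stage $k$ one only invokes bad variables whose polynomial index is strictly smaller. The main subtlety to keep track of is the distinction between the polynomial index $u$ and the first subscript $v$ of the evaluation point, since goodness versus badness is determined purely by how those two indices compare; once that distinction is pinned down there is no genuine algebraic difficulty.
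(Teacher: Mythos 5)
Your proposal is correct and follows essentially the same argument as the paper: induction on the polynomial index $k$, with the base case $k=1$ handled by noting all Lagrange inputs for $Q_1$ are already of the good type, and the inductive step substituting the already-rewritten bad variables $Q_j(a_{kj})$, $j<k$, before evaluating at the points $a_{vk}$, $v>k$. No substantive difference from the paper's proof.
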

\begin{proof}
The Lagrange interpolation polynomial $Q_1(\zeta)$ is already written in terms of $Q_u(a_{1u})$ for $1<u$, so we obtain $Q_1(a_{j1})$ for $j>1$ in terms of $Q_u(a_{1u})$ by evaluating $Q_1(\zeta)$ at $a_{j1}$.

We use induction for $k=2,\dots,n$. The induction hypothesis is that $Q_p(a_{qp})$ for $p<k$ have been written in terms of $Q_u(a_{vu})$ for $v<u$. Then for $Q_k(\zeta)$ of  \eqref{eq:Lagrangian interpolation polynomial},
\begin{align}
	\begin{split} Q_k(\zeta)=C_kA_k(\zeta) &+ \sum_{j < k} e^{-sr_{kj}} \underset{\text{rewritten by induction hypothesis}}{\underbrace{Q_j(a_{kj})}} \underset{l \neq k,j}{\prod}\frac{(\zeta - a_{kl})}{(a_{kj}-a_{kl})}\\&+\sum_{k <j} e^{-sr_{kj}} \underset{\text{desired variables}}{\underbrace{Q_j(a_{kj})}} \underset{l \neq k,j}{\prod}\frac{(\zeta - a_{kl})}{(a_{kj}-a_{kl})}.
	\end{split} \end{align}
Thus, we may evaluate $Q_k(\zeta)$ at $a_{jk}$ for $j>k$ to obtain $Q_k(a_{jk})$ in terms of $Q_u(a_{vu})$ with $v<u$.
\end{proof}
\medskip
\noindent{\bf Step 3:} The final step is to evaluate the polynomials $Q_k(\zeta)$ at $a_{jk}$ for $j<k$ to give us the linear system in the unknowns $X=\{Q_u(a_{vu})\}$, $v<u$. This may be written as
$$AX=B$$
where $B$ is expressed  linearly in terms of the
$C_k$'s with coefficients which are products of $e\sp{-s r_{ij}}$'s and rational in the
$a_{ij}$'s. This system can be solved.

\begin{lemma}\label{lemma:LP linear system invertible} The $\frac12 {n(n-1)} \times \frac12 {n(n-1)}$ matrix $A$ of the linear system for the unknowns $Q_k(a_{jk})$, $k=1,\dots,n$ with $j<k$ is invertible when $s \in (0,\infty)$.
\end{lemma}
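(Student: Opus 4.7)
The plan is to argue by contradiction, reducing invertibility of $A$ to the vanishing $H^0(\mathcal{C}, L^s(n-2)) = 0$ already established in Proposition \ref{prop: dimension of sections}. Suppose $X = \{Q_u(a_{vu})\}_{v<u}$ is a null vector of $A$, so $AX = 0$; I will show $X = 0$.

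The first step is to reconstruct polynomials from the null vector. Fix the free parameters $C_k = 0$ for every $k = 1, \ldots, n$, and use $X$ together with the Lagrange formula \eqref{eq:Lagrangian interpolation polynomial} and the Step 2 substitutions to build $n$ polynomials $Q_k(\zeta)$. With $C_k = 0$ the only degree-$(n-1)$ contribution $C_k A_k(\zeta)$ vanishes, so each $Q_k(\zeta)$ has degree at most $n-2$. The crucial claim is that the tuple $(Q_1,\ldots,Q_n)$ satisfies every matching condition $Q_i(a_{ij}) = e^{-sr_{ij}} Q_j(a_{ij})$ at all $n(n-1)$ double points of $\mathcal{C}$.

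Verifying this claim splits into two cases. For a pair $i < j$, the Lagrange formula for $Q_i$ evaluated at $a_{ij}$ collapses to $Q_i(a_{ij}) = e^{-sr_{ij}} Y_{ij}$, since every other Lagrange basis polynomial in the sum vanishes at $\zeta = a_{ij}$. Matching at $a_{ij}$ then demands $Y_{ij} = Q_j(a_{ij})$, which is precisely the Step 3 equation at index $(i,j)$; this equation holds because $AX = 0$ together with $C = 0$ renders the system fully homogeneous. Matching at the antipodal double point $a_{ji}$ is automatic: Step 2 expressed the parameter $Q_i(a_{ji})$ of $Q_j$'s Lagrange formula as the formula value of $Q_i$ at $a_{ji}$, which coincides with $Q_i(a_{ji})$ by definition, so the Lagrange formula for $Q_j$ evaluated at $a_{ji}$ produces exactly $e^{-sr_{ji}} Q_i(a_{ji})$.

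Consequently $(Q_1,\ldots,Q_n)$ is a global section of $L^s(n-2)$, and Proposition \ref{prop: dimension of sections} forces it to vanish identically. In particular $Y_{jk} = Q_k(a_{jk}) = 0$ for every $j < k$, so $X = 0$ and $A$ is invertible. The main obstacle is the asymmetric bookkeeping: one must carefully track how the $n(n-1)/2$ Step 3 equations cooperate with the Lagrange construction and the Step 2 substitutions to enforce all $n(n-1)$ matching conditions rather than only half of them. Once this accounting is in place, the lemma is an immediate consequence of the earlier vanishing result.
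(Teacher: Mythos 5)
Your proposal is correct and follows essentially the same route as the paper: setting all $C_k=0$, identifying a null vector of $A$ with a tuple of degree $\leq n-2$ polynomials satisfying all matching conditions (hence a section of $L^s(n-2)$), and invoking Proposition \ref{prop: dimension of sections} to conclude $X=0$. The only difference is that you spell out the bookkeeping---how the Step~2 substitutions make half the matching conditions automatic and the Step~3 equations enforce the other half---which the paper's proof leaves implicit.
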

\begin{proof}
		Set all $C_k=0$ so that $Q_k(\zeta)$ are only degree $n-2$ polynomials in $\zeta$ satisfying the matching conditions $Q_i(a_{ij})=e^{-sr_{ij}}Q_j(a_{ij})$. This is the system $AX=0$. A nontrivial solution to $AX=0$ then corresponds to a section belonging to $H^0(S,L^s(n-2))$. But from the proof of Proposition \ref{prop: dimension of sections}, $h^0(S,L^s(n-2))=0$, so that $AX=0$ admits no nontrivial solutions and $A$ is invertible.
\end{proof}

Thus far the coefficients $C_1,C_2,\dots,C_n$  in our Lagrange interpolation polynomials have been free parameters. By rotating the spectral curve if necessary so that the spectral curve does not have double points above $\zeta=0$ and $\zeta=\infty$, the conditions in \hyperlink{PolynomialConditions}{(A2)} fix the parameters $C_1,C_2,\dots,C_n$. For instance, the condition that a polynomial $Q_l(\zeta)$ be monic of maximal degree $n-1$ fixes $C_l=1$;
the conditions $Q_l(\infty)=0$ fixes $C_l=0$;
similarly  $Q_l(0)=0$ fixes $C_l$.
In this way for each of the basis  polynomial $n$-tuples satisfying \hyperlink{PolynomialConditions}{(A)} we may construct the corresponding column $B$
where the column of $X$ consists of the unknowns $Q_k(a_{jk})$ for this $n$-tuple.
Then

\begin{proposition} \label{prop:LP orthonormal basis} The basis of polynomial $n$-tuples satisfying \hyperlink{PolynomialConditions}{(A)} is given by $X=A^{-1}B$, with $A$ and $B$ the linear system obtained from the Lagrangian interpolation polynomials above. 
\end{proposition}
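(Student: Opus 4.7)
The plan is to observe that the proposition is a synthesis of what has already been established in Steps 1--3 of the Lagrangian approach; it remains only to verify that \hyperlink{PolynomialConditions}{(A2)} determines the constants $C_k$ uniquely and to assemble the pieces.

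First I would note that Steps 1--3 together show that the Lagrange interpolation \eqref{eq:Lagrangian interpolation polynomial}, combined with the reduction to independent unknowns $X=\{Q_u(a_{vu}) : v<u\}$, gives a parametrization of the polynomial $n$-tuples $(Q_1,\dots,Q_n)$ of degree $\leq n-1$ satisfying the matching conditions \hyperlink{PolynomialConditions}{(A1)} by the constants $C=(C_1,\dots,C_n)$ and the solution $X=A^{-1}B(C)$ of the linear system $AX=B$, where $B(C)$ is linear in $C$. The invertibility of $A$ on $(0,\infty)$ is Lemma~\ref{lemma:LP linear system invertible}, so this parametrization is a bijection; its dimension agrees with the count $\dim H^0(\mathcal{C},L^s(n-1))=n$ of Proposition~\ref{prop: dimension of sections}.

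Second I would observe that in the formula \eqref{eq:Lagrangian interpolation polynomial} the factor $A_k(\zeta)=\prod_{j\ne k}(\zeta-a_{kj})$ from \eqref{defA} is monic of degree $n-1$, while each interpolation term $\prod_{m\ne k,j}(\zeta-a_{km})/(a_{kj}-a_{km})$ has degree $n-2$; consequently the leading coefficient of $Q_k(\zeta)$ is precisely $C_k$. It follows that the condition that $Q_l^{(l)}$ be monic of degree $n-1$ fixes $C_l=1$, and the conditions $Q_k^{(l)}(\infty)=0$ for $k<l$ force $C_k=0$. Under the rotation assumption that $\zeta=0$ is not a double point, so that $A_k(0)\neq 0$, the remaining conditions $Q_k^{(l)}(0)=0$ for $k>l$, combined with $X=A^{-1}B(C)$, yield $n-l$ linear equations in the remaining constants $(C_{l+1},\dots,C_n)$.

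The main obstacle is the unique solvability of this final square system for $(C_{l+1},\dots,C_n)$. I would settle it by invoking Proposition~\ref{prop:BasisSectionsNahmCorrespondence} (Bielawski's Prop.~2.2), which guarantees the existence and uniqueness of the basis $\{q_l\}$ satisfying \hyperlink{PolynomialConditions}{(A)}: since the map $(C_1,\dots,C_n)\mapsto q$ is already a bijection onto the space of tuples satisfying \hyperlink{PolynomialConditions}{(A1)}, the additional conditions \hyperlink{PolynomialConditions}{(A2)} cut out a unique $C$. Alternatively one could argue directly that the kernel of the $(n-l)\times(n-l)$ block corresponds to a nonzero section of $L^s(n-1)$ with leading coefficients vanishing on the first $l$ sheets and with forced vanishing at $\zeta=0$ on the remaining sheets, which after absorbing the zero at $0$ reduces to a section of a line bundle of degree smaller than $n-1$ on each sheet satisfying the matching conditions, contradicting the dimension vanishing of Proposition~\ref{prop: dimension of sections}. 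With the uniquely-determined $C_k$'s in hand, $X=A^{-1}B$ realises each basis element $q_l$, which is the content of the proposition.
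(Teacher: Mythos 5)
Your proposal is correct and takes essentially the same route as the paper, whose justification for this proposition is exactly the preceding material: Steps 1--3 assembling $AX=B$, the invertibility of $A$ from Lemma~\ref{lemma:LP linear system invertible} (via $h^0(\mathcal{C},L^s(n-2))=0$), and the observation that the conditions (A2) fix the constants $C_k$ (leading coefficient of $Q_k$ being $C_k$); indeed your explicit handling of the coupled determination of $C_{l+1},\dots,C_n$ through $X=A^{-1}B(C)$ is more careful than the paper's ``similarly $Q_l(0)=0$ fixes $C_l$.'' Two small caveats: the existence and uniqueness of the basis satisfying (A) that you invoke is Bielawski's Proposition~2.2, cited at the start of Section~\ref{section:construction of polynomials}, not Proposition~\ref{prop:BasisSectionsNahmCorrespondence} (which is the frames--Nahm correspondence); and your alternative kernel argument does not literally reduce to Proposition~\ref{prop: dimension of sections}, since vanishing at $\infty$ on the first $l$ sheets and at $0$ on the rest twists $L^s(n-1)$ by a divisor that is not a pullback from $\mathbb{P}^1$, giving a degree-lowered bundle different from $L^s(n-2)$ --- though a perturbation-expansion argument parallel to that proposition would close this.
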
 

We will give two examples. The first, for
$n=2$,  illustrates how the conditions on the $C_i$ are fixed. The second, the rank $n=3$ example, is more illustrative of the general case.

\begin{example}
We follow the above steps. For $n=2$ the
Lagrange interpolation polynomials are
rather simple. The first step gives
\begin{align*}
 Q_1(\zeta)&=C_1A_1(\zeta)+e^{-sr_{12}}Q_2(a_{12}), \\
Q_2(\zeta)&=C_2A_2(\zeta)+e^{-sr_{12}}Q_1(a_{21}).
\end{align*}
Suppose we have the first basis element $\mathfrak{q}_1=(Q_1(\zeta),
Q_2(\zeta))$. Then $Q_1(\zeta)$ is monic, and $C_1=1$. We require $Q_2(0)=0$ and
so 
$$C_2= \frac{e^{-sr_{12}} }{a_{21}}\, Q_1(a_{21}).$$
We now have
\begin{align*}
&Q_1(a_{21})=a_{21}-a_{12}+e^{-sr_{12}}Q_2(a_{12}), \\
&Q_2(\zeta)=C_2A_2(\zeta)+e^{-sr_{12}}Q_1(a_{21})
=\frac{e^{-sr_{12}} }{a_{21}}\, Q_1(a_{21})
\left[ \zeta- a_{21}\right] +
e^{-sr_{12}}Q_1(a_{21}).
\intertext{Upon substituting  $Q_1(a_{21})$
we have $Q_2(\zeta)$ defined by $Q_2(a_{12})$. Solving for this yields}
Q_2(a_{12})&= \frac{a_{12}[a_{12}-a_{21}]
e^{-sr_{12}} }{a_{21}-a_{12} e^{-2sr_{12}}},
\qquad
Q_2(\zeta)= \frac{\zeta[a_{12}-a_{21}]
e^{-sr_{12}} }{a_{21}-a_{12} e^{-2sr_{12}}}.
\end{align*}

\end{example}

\begin{example}
 Now $n=3$. We again
follow the above steps. We aim to write the linear system obtained by Lagrange
interpolation to solve for a polynomial $3$-tuple $(Q_1(\zeta),Q_2(\zeta),Q_3(\zeta))$ that satisfies the matching conditions  in the form
$$
AX=
\begin{pmatrix} A_{11} & A_{12}&A_{13} \\ 
		A_{21} &   A_{22} & A_{23} \\
	A_{31} & A_{32} & A_{33}\end{pmatrix}
\begin{pmatrix}
Q_2(a_{12}) \\ Q_3(a_{13}) \\Q_3(a_{23})
\end{pmatrix}=
\begin{pmatrix} b_1 \\ b_2 \\ b_2 \end{pmatrix}=B.
$$

The Lagrange interpolation polynomials of \eqref{eq:Lagrangian interpolation polynomial} are
	\begin{align*}
	\begin{split}
	 Q_1(\zeta)&=C_1A_1(\zeta)+e^{-sr_{12}}Q_2(a_{12})\frac{\zeta-a_{13}}{a_{12}-a_{13}} +e^{-sr_{13}}Q_3(a_{13})\frac{\zeta-a_{12}}{a_{13}-a_{12}}, \\
	 Q_2(\zeta)&=C_2A_2(\zeta)+e^{-sr_{12}}Q_1(a_{21})\frac{\zeta-a_{23}}{a_{21}-a_{23}}+e^{-sr_{23}}Q_3(a_{23}) \frac{\zeta -a_{21}}{a_{23}-a_{21}}, \\
	Q_3(\zeta)&=C_3A_3(\zeta)+e^{-sr_{13}}Q_1(a_{31})\frac{\zeta-a_{32}}{a_{31}-a_{32}}+e^{-sr_{23}}Q_2(a_{32})\frac{\zeta-a_{31}}{a_{32}-a_{31}}.
	\end{split}
	\end{align*}
	
One sees that all six values of $Q_u(a_{vu})$ for $v<u$ and $v>u$ are present in this system of polynomials. The three values $Q_u(a_{vu})$ for $v>u$ may be written in terms of the other three $Q_u(a_{vu})$ for $v<u$ by evaluating the interpolation polynomial for $Q_k(\zeta)$ at $a_{jk}$ for $j>k$ and using recursion.
	\begin{align*}
	\begin{split}
	Q_1(a_{21})&=C_1A_1(a_{21})+e^{-sr_{12}}Q_2(a_{12})\frac{a_{21}-a_{13}}{a_{12}-a_{13}}+e^{-sr_{13}}Q_3(a_{13})\frac{a_{21}-a_{12}}{a_{13}-a_{12}}, \\
	Q_1(a_{31})&=C_1A_1(a_{31})+e^{-sr_{12}}Q_2(a_{12})\frac{a_{31}-a_{13}}{a_{12}-a_{13}}+e^{-sr_{13}}Q_3(a_{13})\frac{a_{31}-a_{12}}{a_{13}-a_{12}}, \\
	Q_2(a_{32})&=C_2A_2(a_{32})+e^{-sr_{12}}Q_1(a_{21})\frac{a_{32}-a_{23}}{a_{21}-a_{23}}+e^{-sr_{23}}Q_3(a_{23})\frac{a_{32}-a_{21}}{a_{23}-a_{21}}.
	\end{split}
	\end{align*}
	
Using these the interpolation polynomials for $Q_k(\zeta)$, $k=2,3$, may be expressed in terms of the three unknowns $Q_2(a_{12}),Q_3(a_{13}),Q_3(a_{23})$: 
\begin{align*}
Q_2(\zeta)&= C_2A_2(\zeta)+e^{-sr_{12}}\frac{\zeta-a_{23}}{a_{21}-a_{23}}\bigg(C_1A_1(a_{21})+e^{-sr_{12}}Q_2(a_{12})\frac{a_{21}-a_{13}}{a_{12}-a_{13}} \\ &\qquad +e^{-sr_{13}}Q_3(a_{13})\frac{a_{21}-a_{12}}{a_{13}-a_{12}}\bigg)  +e^{-sr_{23}}Q_3(a_{23})\frac{\zeta-a_{21}}{a_{23}-a_{21}}.
\\
Q_3(\zeta)&=C_3A_3(\zeta) \\ +e^{-sr_{13}}&\frac{\zeta-a_{32}}{a_{31}-a_{32}}
\left(C_1A_1(a_{31})+e^{-sr_{12}}Q_2(a_{12})\frac{a_{31}-a_{13}}{a_{12}-a_{13}}+e^{-sr_{13}}Q_3(a_{13})\frac{a_{31}-a_{12}}{a_{13}-a_{12}}\right) \\ +e^{-sr_{23}}&\frac{\zeta-a_{31}}{a_{32}-a_{31}}\Bigg(C_2A_2(a_{32}) +e^{- sr_{12}}\frac{a_{32}-a_{23}}{a_{21}-a_{23}}\bigg(C_1A_1(a_{21}) +e^{-sr_{12}}Q_2(a_{12})\frac{a_{21}-a_{13}}{a_{12}-a_{13}} \\
&+e^{-sr_{13}}Q_3(a_{13})\frac{a_{21}-a_{12}}{a_{13}-a_{12}}\bigg) +e^{-sr_{23}}Q_3(a_{23})\frac{a_{32}-a_{21}}{a_{23}-a_{21}}\Bigg).
\end{align*}
Now evaluating $Q_2(\zeta)$ at $\zeta=a_{12}$ and $Q_3(\zeta)$ at $\zeta=a_{13},a_{23}$ we obtain the desired linear system in the three unknowns $Q_2(a_{12})$, $Q_3(a_{13})$, $Q_3(a_{23})$. 
	\begin{footnotesize}
	\begin{align*}
	Q_2(a_{12}) &= C_2A_2(a_{12}) \\ &\qquad+e^{-sr_{12}}\frac{a_{12}-a_{23}}{a_{21}-a_{23}}\left(C_1A_1(a_{21})+e^{-sr_{12}}Q_2(a_{12})\frac{a_{21}-a_{13}}{a_{12}-a_{13}}+e^{-sr_{13}}Q_3(a_{13})\frac{a_{21}-a_{12}}{a_{13}-a_{12}}\right) \\ &\qquad +e^{-sr_{23}}Q_3(a_{23})\frac{a_{12}-a_{21}}{a_{23}-a_{21}}.\\
		Q_3(a_{13})&=C_3A_3(a_{13})\\ &\qquad+e^{-sr_{13}} \frac{a_{13}-a_{32}}{a_{31}-a_{32}}\big(C_1A_1(a_{31})+e^{-sr_{12}}Q_2(a_{12})\frac{a_{31}-a_{13}}{a_{12}-a_{13}}+e^{-sr_{13}}Q_3(a_{13})\frac{a_{31}-a_{12}}{a_{13}-a_{12}}\big) \\ &\qquad +e^{-sr_{23}}\Bigg(C_2A_2(a_{32})+e^{-sr_{12}}\frac{a_{32}-a_{23}}{a_{21}-a_{23}}\bigg[C_1A_1(a_{21})+e^{-sr_{12}}Q_2(a_{12})\frac{a_{21}-a_{13}}{a_{12}-a_{13}} \\ &\qquad\qquad\qquad\qquad\qquad +e^{-sr_{13}}Q_3(a_{13})\frac{a_{21}-a_{12}}{a_{13}-a_{12}}\bigg]  +e^{-sr_{23}}Q_3(a_{23})\frac{a_{32}-a_{21}}{a_{23}-a_{21}}\Bigg) \frac{a_{13}-a_{31}}{a_{32}-a_{31}}.\\
	Q_3(a_{23})&=C_3A_3(a_{23})+e^{-sr_{13}}\frac{a_{23}-a_{32}}{a_{31}-a_{32}}\bigg(C_1A_1(a_{31})+e^{-sr_{12}}Q_2(a_{12})\frac{a_{31}-a_{13}}{a_{12}-a_{13}} \\ &\qquad\qquad\qquad\qquad\qquad\qquad\qquad\qquad
	+e^{-sr_{13}}Q_3(a_{13})\frac{a_{31}-a_{12}}{a_{13}-a_{12}}\bigg)
	\\ &\qquad+e^{-sr_{23}}\frac{a_{23}-a_{31}}{a_{32}-a_{31}}\Bigg(C_2A_2(a_{32})+e^{-sr_{12}}\frac{a_{32}-a_{23}}{a_{21}-a_{23}}\bigg[C_1A_1(a_{21})\\ &\qquad+e^{-sr_{12}}Q_2(a_{12})\frac{a_{21}-a_{13}}{a_{12}-a_{13}} +e^{-sr_{13}}Q_3(a_{13})\frac{a_{21}-a_{12}}{a_{13}-a_{12}}\bigg] +e^{-sr_{23}}Q_3(a_{23})\frac{a_{32}-a_{21}}{a_{23}-a_{21}}\Bigg).
		\end{align*}
	\end{footnotesize}
Rearranging these yield
		\begin{align*}
		A_{11}&=-1+e^{-2sr_{12}}\frac{a_{21}-a_{13}}{a_{12}-a_{13}}\frac{a_{12}-a_{23}}{a_{21}-a_{23}}, \\
		A_{12}&= e^{-s(r_{12}+r_{13})}\frac{a_{21}-a_{12}}{a_{13}-a_{12}}\frac{a_{12}-a_{23}}{a_{21}-a_{23}}, \\
		A_{13}&= e^{-sr_{23}}\frac{a_{12}-a_{21}}{a_{23}-a_{21}} , \\
		A_{21}&= e^{-s(r_{13}+r_{12})}\frac{a_{31}-a_{13}}{a_{12}-a_{13}}\frac{a_{13}-a_{32}}{a_{31}-a_{32}} +e^{-s(r_{23}+2r_{12})}\frac{a_{32}-a_{23}}{a_{21}-a_{23}}\frac{a_{21}-a_{13}}{a_{12}-a_{13}}\frac{a_{13}-a_{31}}{a_{32}-a_{31}}, \\
		A_{22}&= -1+ e^{-2sr_{13}}\frac{a_{31}-a_{12}}{a_{13}-a_{12}}\frac{a_{13}-a_{32}}{a_{31}-a_{32}} +e^{-s(r_{23}+r_{12}+r_{13})}\frac{a_{32}-a_{23}}{a_{21}-a_{23}}\frac{a_{21}-a_{12}}{a_{13}-a_{12}}\frac{a_{13}-a_{31}}{a_{32}-a_{31}}, \\
		A_{23}&= e^{-2sr_{23}}\frac{a_{32}-a_{21}}{a_{23}-a_{21}}\frac{a_{13}-a_{31}}{a_{32}-a_{31}} , \\
		A_{31}&=  e^{-s(r_{13}+r_{12})}\frac{a_{23}-a_{32}}{a_{31}-a_{32}}\frac{a_{31}-a_{13}}{a_{12}-a_{13}}+e^{-s(r_{23}+2r_{12})}\frac{a_{21}-a_{13}}{a_{12}-a_{13}}\frac{a_{23}-a_{31}}{a_{32}-a_{31}}\frac{a_{32}-a_{23}}{a_{21}-a_{23}}, \\
		A_{32}&= e^{-2sr_{13}}\frac{a_{23}-a_{32}}{a_{31}-a_{32}}\frac{a_{31}-a_{12}}{a_{13}-a_{12}}+e^{-s(r_{23}+r_{12}+r_{13})}\frac{a_{21}-a_{12}}{a_{13}-a_{12}}\frac{a_{32}-a_{23}}{a_{21}-a_{23}}\frac{a_{23}-a_{31}}{a_{32}-a_{31}}, \\
		A_{33}&= -1 + e^{-2sr_{23}}\frac{a_{32}-a_{21}}{a_{23}-a_{21}}\frac{a_{23}-a_{31}}{a_{32}-a_{31}}.
		\end{align*}
		and
\begin{align*}
b_1&=C_2A_2(a_{12})+e^{-sr_{12}}C_1A_1(a_{21})\frac{a_{12}-a_{23}}{a_{21}-a_{23}},
\\
b_2&=C_3A_3(a_{13})+e^{-sr_{13}}C_1A_1(a_{31})\frac{a_{13}-a_{32}}{a_{31}-a_{32}}+e^{-sr_{23}}C_2A_2(a_{32})\frac{a_{13}-a_{31}}{a_{32}-a_{31}} \\ &\qquad +e^{-s(r_{23}+r_{12})}C_1A_1(a_{21})\frac{a_{32}-a_{23}}{a_{21}-a_{23}}\frac{a_{13}-a_{31}}{a_{32}-a_{31}},
	   \\
	   b_3&=C_3A_3(a_{23})+e^{-sr_{13}}C_1A_1(a_{31})\frac{a_{23}-a_{32}}{a_{31}-a_{32}}+e^{-sr_{23}}C_2A_2(a_{32})\frac{a_{23}-a_{31}}{a_{32}-a_{31}} \\ &\qquad
	   +e^{-s(r_{23}+r_{12})}C_1A_1(a_{21})\frac{a_{23}-a_{31}}{a_{32}-a_{31}}\frac{a_{32}-a_{23}}{a_{21}-a_{23}},
\end{align*}
where $B$ is linear in the $C_i$'s.

\end{example}

\subsection{Perturbative Expansion of Polynomials}
We would like to understand the behavior of the basis of $n$-tuple polynomials satisfying \hyperlink{PolynomialConditions}{(A)} for large $s$. We do this by constructing its approximate solution, which we call its perturbative expansion, for large $s$ in terms of a series in the small parameters $e^{-sr_{ij}}$. The higher-order terms in the series become successively smaller and we give a method for constructing the approximate solution to arbitrary order. 
From this, we see that the matrix of our basis becomes diagonal as $s \to \infty$ so that the corresponding Nahm solution satisfies the boundary condition \eqref{eq:NahmBddConstructed}.

Consider the first element of the basis
${q}_1=(Q_1\sp{(1)},\ldots,Q_n\sp{(1)})$; the story is analogous for other elements (and we will again drop the superscripts for ease of notation).
Set $Q_j'(\zeta):=\lim_{s\rightarrow\infty} Q_j(\zeta)$. The limit as $s \to \infty$ of the matching conditions $Q_{j}(a_{ji})=e^{-sr_{ij}}Q_{i}(a_{ji})$ gives
$
Q'_{j}(a_{ji})=0.
$
In addition to this, 
the conditions of \hyperlink{PolynomialConditions}{(A2)} on ${q}_1$
states that $Q'_{j \neq 1}(\zeta)$ vanishes at $0$ and at $n-1$ many other points $a_{jk}$. Since $Q'_{j \neq 1}(\zeta)$ is a polynomial of degree at most $n-1$, we must have $Q'_{j \neq 1}(\zeta) \equiv 0$.  As $Q'_1(\zeta)$ vanishes at $a_{1k}$ for $k=2,3,\dots,n$ and is a monic polynomial, then $$Q'_1(\zeta)=\overset{n}{\underset{k =2}{\prod}} (\zeta - a_{1k})=A_1(\zeta)$$ 
where $A_i(\zeta)$ was defined in (\ref{defA}).

We are seeking an expansion for the first basis element, which we have called the perturbation expansion for large $s$, with
\begin{align} \label{eq:perturbation expansion}
Q_1(\zeta)&=A_1(\zeta)+e^{-s \Delta_1}q_1(\zeta)+e^{-s\Delta_1'}q_1'(\zeta)+\dots \\
Q_{j \neq 1}(\zeta)&=e^{-s \Delta_j}\zeta q_j(\zeta)+e^{-s \Delta_j'}\zeta q_j'(\zeta)+\dots
\end{align}
with $0<\Delta_k < \Delta_k'<\Delta_k''<\dots$. The polynomials $q_k(\zeta)$ are degree less than or equal to $n-2$ in $\zeta$ and independent of $s$. 

The method of obtaining the perturbation expansion to arbitrary order is given as follows. The zeroth order is $Q_1=A_1(\zeta)$, $Q_{j \neq 1}(\zeta)=0$. By induction, given the expansion at $n$-th order, the $(n+1)$-st order of $Q_1$ is the Lagrangian interpolation polynomial for the values $e^{-sr_{1i}}e^{-s\Delta_i^{(n)}}\tilde{q}^{(n)}_i(a_{1i})$ at the points $a_{1i}$, for $1 < i \leq n$, where $e^{-s\Delta_i^{(n)}}\tilde{q}^{(n)}_i(\zeta)$ is the $n$-th order of $Q_i(\zeta)$. That is, the $(n+1)$-st order of $Q_1$ is given by
\begin{align} \overset{n}{\underset{i=2}{\sum}} e^{-s(\Delta_i^{(n)}+r_{i1})}\tilde{q}_i^{(n)}(a_{1i}) \prod_{k \neq 1,i} \frac{\zeta - a_{1k}}{ a_{1i}-a_{1k}}. \end{align} 
The $(n+1)$-st order of $Q_{j \neq 1}$ is similar, except each term has an extra factor to account for the vanishing conditions on the basis. Explicitly, the $n+1$-st order is
\begin{align} \overset{n}{\underset{i \neq j}{\sum}} e^{-s(\Delta_i^{(n)}+r_{ij})}\tilde{q}_i^{(n)}(a_{ji}) \frac{\zeta}{a_{ji}} \prod_{k \neq j,i} \frac{\zeta - a_{jk}}{ a_{ji}-a_{jk}}. \end{align} 
The points for the Lagrangian interpolation for the $(n+1)$-st order of $Q_j$  comes from the matching conditions 
\begin{align}
Q_j(a_{ji})=e^{-sr_{ij}}Q_i(a_{ji}).
\end{align}

In general, for the $l$-th basis element, the factor $\frac{\zeta}{a_{ji}}$ is either kept or eliminated according to the vanishing conditions. The limit as $s \to \infty$ of the matrix $Q^N(s,\zeta)$ in this basis is then
\begin{align}
    \underset{s \to \infty}{\lim} Q^N(s,\zeta)=\text{diag}\left(A_1(\zeta),A_2(\zeta),\dots,A_n(\zeta)\right).
\end{align}

We illustrate the method by presenting the $n=2,3$ cases.
\begin{example} \label{example:n=2Exact} The first basis element is ${q}_1=(Q_1^{(1)}(\zeta),Q_2^{(1)}(\zeta))$ with
	\begin{center}
		\begin{tabular}{ l |c |c  |c  | l }
			
			Order & Zeroth & First  & Second  & \dots \\
			\hline
			$Q^{(1)}_1(\zeta)=$ & $\zeta-a_{12}$ & 0 & $e^{-s2r_{12}}(a_{21}-a_{12})\frac{a_{12}}{a_{21}}$ & \dots  \\ 
			\hline
			$Q^{(1)}_2(\zeta)=$ & 0 & $e^{-sr_{12}} (a_{21}-a_{12})\frac{\zeta}{a_{21}}$ & 0 & \dots \\
			
		\end{tabular}
	\end{center}
	It is not difficult to continue the expansion and write a closed form. The denominators reproduce the determinant of the matrix $A$ of the previous section.
	\begin{align*}
	Q^{(1)}_1(\zeta)&=\zeta-a_{12}+e^{-s2r_{12}}(a_{21}-a_{12})\frac{a_{12}}{a_{21}}+e^{-s4r_{12}}(a_{21}-a_{12})(\frac{a_{12}}{a_{21}})^2+\dots \\ &=\zeta-a_{12}+\frac{a_{12}(a_{21}-a_{12})}{a_{21}e^{2sr_{12}}-a_{12}},
	\\
	Q^{(1)}_2(\zeta)&=e^{-sr_{12}}(a_{21}-a_{12})\frac{\zeta}{a_{21}}+e^{-3sr_{12}}(a_{21}-a_{12})\frac{a_{12}\zeta}{a_{21}^2}+e^{-5sr_{12}}(a_{21}-a_{12})\frac{a_{12}^2\zeta}{a_{21}^3}+\dots \\ &=\frac{\zeta(a_{21}-a_{12})}{a_{21}e^{sr_{12}}-e^{-sr_{12}}a_{12}}.
	\end{align*}
	The second basis element is ${q}_2=(Q_1^{(2)}(\zeta),Q_2^{(2)}(\zeta))$ with
		\begin{center}
		\begin{tabular}{ l |c |c  |c  | l }
			
			Order & Zeroth & First  & Second  & \dots \\
			\hline
			$Q^{(2)}_1(\zeta)=$ &0  & $e^{-sr_{12}}(a_{12}-a_{21})$ & 0 & \dots  \\ 
			\hline
			$Q^{(2)}_2(\zeta)=$ & $\zeta-a_{21}$ & 0 & $e^{-s2r_{12}}(a_{12}-a_{21})$ & \dots \\
			
		\end{tabular}
	\end{center}
	Again, we may continue the expansion and write a closed form.
	\begin{align*}
	Q^{(2)}_1(\zeta)&=e^{-sr_{12}}(a_{12}-a_{21})+e^{-s3r_{12}}(a_{12}-a_{21})+\dots \\ &=\frac{a_{12}-a_{21}}{e^{sr_{12}}-e^{-sr_{12}}},
	 \\
	Q^{(2)}_2(\zeta)&=\zeta-a_{21}+e^{-s2r_{12}}(a_{12}-a_{21})+e^{-s4r_{12}}(a_{12}-a_{21})+\dots \\ &=\zeta-a_{21}+\frac{a_{12}-a_{21}}{e^{2sr_{12}}-1}.
	\end{align*}
\end{example}

\begin{example}	\label{example:n=3Perturbative}
	We present the $n=3$ case up to second order for the first basis element to illustrate the method. For the remaining elements, we limit our formulas to first order. 
	The first basis element ${q}_1=(Q_1^{(1)}(\zeta),Q_2^{(1)}(\zeta),Q_3^{(1)}(\zeta))$ is
	\begin{center}
		\begin{tabular}{  p{1cm} |P{1cm}  |P{4cm}  |P{6.5cm}  }
			
			Order & Zeroth & First  & Second   \\
			\hline
			$Q^{(1)}_1=$ & $A_1(\zeta)$ & 0 & $e^{-2sr_{12}}A_1(a_{21})\frac{a_{12}}{a_{21}}\frac{a_{12}-a_{23}}{a_{21}-a_{23}}\frac{\zeta-a_{13}}{a_{12}-a_{13}}+e^{-2sr_{13}}A_1(a_{31})\frac{a_{13}}{a_{31}}\frac{a_{13}-a_{32}}{a_{31}-a_{32}}\frac{\zeta-a_{12}}{a_{13}-a_{12}}$ \\ 
			\hline
			$Q^{(1)}_2=$ & 0 & $e^{-sr_{12}}A_1(a_{21})\frac{\zeta}{a_{21}}\frac{\zeta-a_{23}}{a_{21}-a_{23}}$ & $e^{-s(r_{12}+r_{13})}A_1(a_{31})\frac{a_{23}}{a_{31}}\frac{a_{23}-a_{32}}{a_{31}-a_{32}}\frac{\zeta}{a_{23}}\frac{\zeta-a_{21}}{a_{23}-a_{21}}$   \\
			\hline
			$Q^{(1)}_3=$ & 0 & $e^{-sr_{13}}A_1(a_{31})\frac{\zeta}{a_{31}}\frac{\zeta-a_{32}}{a_{31}-a_{32}}$ & $e^{-s(r_{12}+r_{13})}A_1(a_{21})\frac{a_{32}}{a_{21}}\frac{a_{32}-a_{23}}{a_{21}-a_{23}}\frac{\zeta}{a_{32}}\frac{\zeta-a_{31}}{a_{32}-a_{31}}$\\
			
		\end{tabular}
	\end{center}
	
	The second element ${q}_2=(Q_1^{(2)}(\zeta),Q_2^{(2)}(\zeta),Q_3^{(2)}(\zeta))$  to first order is
	\begin{align*}
	Q^{(2)}_1(\zeta)&=e^{-sr_{12}}A_2(a_{12})\frac{\zeta-a_{13}}{a_{12}-a_{13}}, \quad
	Q^{(2)}_2(\zeta)=A_2(\zeta), \quad
	Q^{(2)}_3(\zeta)=e^{-sr_{23}}A_2(a_{32})\frac{\zeta}{a_{32}}\frac{\zeta-a_{31}}{a_{32}-a_{31}}.
	\end{align*}
	
	The third element ${q}_3=(Q_1(\zeta),Q_2(\zeta),Q_3(\zeta))=(Q_1^{(3)}(\zeta),Q_2^{(3)}(\zeta),Q_3^{(3)}(\zeta))$  to first order is
	\begin{align*}
	Q^{(3)}_1(\zeta)&=e^{-sr_{13}}A_3(a_{13})\frac{\zeta-a_{12}}{a_{13}-a_{12}}, \qquad
	Q^{(3)}_2(\zeta)=e^{-sr_{23}}A_3(a_{23})\frac{\zeta-a_{21}}{a_{23}-a_{21}}, \quad
	Q^{(3)}_3(\zeta)=A_3(\zeta).
	\end{align*}
\end{example}

\section{Solutions to Nahm's Equations}
We illustrate the procedure of section \ref{section:TheSpectralApproach} for obtaining rank $n$ Nahm solutions by giving examples of an exact Nahm solution for $n=2$ and a perturbative Nahm solution for $n=3$. We obtained the orthogonal basis $Q(s,\zeta)$ of \hyperlink{PolynomialConditions}{(A)}, consisting of polynomials of degree $\leq n-1$ in $\zeta$, for $n=2$ in Example \ref{example:n=2Exact} and for $n=3$ in Example \ref{example:n=3Perturbative}. For each example, we now find the corresponding Nahm solution.

The Nahm solutions are constructed from the Lax pair $(L,M)$ in terms of $Q(s,\zeta)$ as in Proposition \ref{prop:BasisSectionsNahmCorrespondence}, and we may evaluate $L^N$ and $M^N$ at $\zeta=0$ and obtain
\begin{align} \label{eq:NahmFromLax}
\begin{aligned}
T_1(s)&=\frac{i}{2}(L^N(s,0)+L^N(s,0)^\dagger),  &T_2(s)&=\frac{1}{2}(L^N(s,0)-L^N(s,0)^\dagger), \\ 
T_3(s)&=\frac{i}{2}(M^N(s,0)+M^N(s,0)^\dagger), 
&T_0(s)&=\frac{1}{2}(M^N(s,0)-M^N(s,0)^\dagger).
\end{aligned}
\end{align}

\subsection{n=2 Exact Solution}
The exact solution to Nahm's equations for $n=2$ is known to be given by $T_i=f_i(s)\sigma_i$, for $f_i(s)$ the hyperbolic functions satisfying the Euler top system \cite[Eqn. 4]{Ward:1985ww}.
For example, taking the boundary conditions in \eqref{eq:NahmBddConditionNoT0} to be \begin{align} \label{eq:n=2BoundaryConditions}
\lim_{s \to \infty} (T_1(s),T_2(s),T_3(s)) \in \text{ad}_{U(n)} \left(0,0,i\left(\begin{smallmatrix}
c/2 & 0 \\ 0 & -c/2 \end{smallmatrix}\right)\right) , \ \ \ \  \lim_{s \to 0} sT_j(s)=\frac{i\sigma_j}{2},
\end{align} the unique solution to Nahm's equations is 
\begin{align} \label{eq:n=2HyperbolicSolution}
T_1&=i \begin{pmatrix} 0 & 1 \\ 1 & 0 \end{pmatrix} \frac{c}{2\sinh(cs)}, \ 
T_2 =i \begin{pmatrix} 0 & -i \\ i & 0 \end{pmatrix} \frac{c}{2\sinh(cs)}, \ 
T_3=i \begin{pmatrix} 1 & 0 \\ 0 & -1 \end{pmatrix} \frac{c}{2\tanh(cs)}. 
\end{align}

As an illustration we shall rederive this  solution from the basis in Example \ref{example:n=2Exact}.
The matrix $Q$ of Example \ref{example:n=2Exact} is 
\begin{align}
Q=\begin{pmatrix}\zeta - a_{12} + \frac{a_{12}(a_{21}-a_{12})}{a_{21}e^{2sr_{12}}-a_{12}}&\frac{\zeta(a_{21}-a_{12})}{a_{21}e^{sr_{12}}-a_{12}e^{-sr_{12}}} \\ \frac{a_{12}-a_{21}}{e^{sr_{12}}-e^{-sr_{12}}} & \zeta-a_{21} +\frac{a_{12}-a_{21}}{e^{2sr_{12}}-1}
\end{pmatrix}.
\end{align}
The rows of $Q$ must be normalized with respect to the norm of \eqref{eq:inner product polynomials}. Let $z_{j}:=x_j^1+ix_j^2$ and $z_{jk}:=z_j-z_k$. The norm of the row $\psi_{j}(s,\zeta)=(Q_{1}\sp{(j)},Q_{2}\sp{(j)})$ for $1 \leq j \leq 2$ is
\begin{align}
|| \psi_1 ||^2&= \frac{-1+e^{-2sr_{12}}}{-\bar{z}_{12}(a_{21}-a_{12}e^{-2sr_{12}})}, \quad
|| \psi_2||^2 =\frac{a_{21}-a_{12}e^{-2sr_{12}}}{-\bar{z}_{12}a_{12}a_{21}(1-e^{-2sr_{12}})}.
\end{align}
The resulting Nahm solution from \eqref{eq:NahmFromLax} is
\begin{align}
\begin{split}
&T_1=i\begin{pmatrix} x_1^1 & -\frac{r_{12}}{2\sinh(sr_{12})}\frac{\overline{z_{12}}}{|z_{12}|} \\ -\frac{r_{12}}{2\sinh(sr_{12})}\frac{z_{12}}{|z_{12}|} & x_2^1 \end{pmatrix}, \quad
T_2=i\begin{pmatrix} x_1^2 & \frac{-ir_{12}}{2\sinh(sr_{12})}\frac{\overline{z_{12}}}{|z_{12}|} \\ \frac{ir_{12}}{2\sinh(sr_{12})}\frac{z_{12}}{|z_{12}|} &x_2^2 \end{pmatrix}, \\
&T_3= \frac{i}{2} \left(\begin{matrix}
2x_1^3-\frac{r_{12}^2}{\sinh(sr_{12})(r_{12}\cosh(sr_{12})-x_{12}^3\sinh(sr_{12}))} &
\frac{-r_{12}|z_{12}|}{r_{12}\cosh(sr_{12})-x_{12}^3\sinh(sr_{12})}
\\ \frac{-r_{12}|z_{12}|}{r_{12}\cosh(sr_{12})-x_{12}^3\sinh(sr_{12})} & 2x_2^3+\frac{r_{12}^2}{\sinh(sr_{12})(r_{12}\cosh(sr_{12})-x_{12}^3\sinh(sr_{12}))}\end{matrix}\right), \\
&T_0=\begin{pmatrix}0 & \frac{r_{12}|z_{12}| }{2(r_{12}\cosh(sr_{12})-x_{12}^3\sinh(sr_{12}))} \\ \frac{-r_{12} |z_{12}|}{2(r_{12}\cosh(sr_{12})-x_{12}^3\sinh(sr_{12}))} & 0 \end{pmatrix}.
\end{split}
\end{align}
When taking the point configuration $(0,0,c/2)$ and $(0,0,-c/2)$ of $\mathbb{R}^3$ from the boundary conditions \eqref{eq:n=2BoundaryConditions} for $s \to \infty$, the above solution is
\begin{align}
T_1=i\begin{pmatrix} 0 & 1 \\ 1 & 0 \end{pmatrix}\frac{-c}{2\sinh(cs)}, \quad
T_2=i\begin{pmatrix} 0 & -i \\ i & 0 \end{pmatrix}\frac{c}{2\sinh(cs)}, \quad
T_3=i\begin{pmatrix} 1 & 0 \\ 0 & -1 \end{pmatrix}\frac{-c}{2\tanh(cs)}.
\end{align}
This does not yet satisfy the boundary conditions \eqref{eq:n=2BoundaryConditions} at $s=0$: the constant gauge transformation taking our solution to the one of \eqref{eq:n=2HyperbolicSolution} is simply $g_0=\left( \begin{smallmatrix} 0 & -i \\ i & 0 \end{smallmatrix} \right)$.

\subsection{Perturbative Expansion of the n=3 Solution}
We now give the perturbative solution for large $s$ to Nahm's equations for $n=3$, up to the first order using the perturbative basis of $n$-tuples of polynomials satisfying \hyperlink{PolynomialConditions}{(A)}  we found in Example \ref{example:n=3Perturbative} up to this order.
The perturbation expansion for large $s$ of a Nahm solution is an expansion with the form
\begin{align}
T_j(s)=T_j^{(0)}+e^{-s\Delta_j}T_j^{(1)}+e^{-s\Delta_j'}T_j^{(2)}+\dots
\end{align}
with $0 < \Delta_j < \Delta_j' < \Delta_j'' <\dots$. The matrices $T_j^{(k)}$ are independent of $s$. 

For the point $\vec{x}_j=(x_j^1,x_j^2,x_j^3) \in \mathbb{R}^3$, we again write 
$z_j:=x_j^1+ix_j^2$, $z_{jk}:=z_j-z_k$,
$r_{jk}:=||\vec{x}_j-\vec{x}_k||$ and
set $x_j:=x_j^3$.
The perturbation expansion for large $s$ of $Q$ to the first order is
\begin{align*}
 Q(s,\zeta)= \begin{pmatrix}Q_{11} & Q_{12} &Q_{13} \\ Q_{21} & Q_{22} & Q_{23} \\ Q_{31} & Q_{32} & Q_{33}
\end{pmatrix},
\end{align*}
with
\begin{align*}
\begin{split}
&Q_{11}= (\zeta-a_{12})(\zeta-a_{13}), \\
&Q_{12}=e^{-sr_{12}}(a_{21}-a_{12})(a_{21}-a_{13})\frac{\zeta}{a_{21}}\frac{\zeta-a_{23}}{a_{21}-a_{23}}, \\
&Q_{13}=e^{-sr_{13}}(a_{31}-a_{12})(a_{31}-a_{13})\frac{\zeta}{a_{31}}\frac{\zeta-a_{32}}{a_{31}-a_{32}},
\end{split}
\end{align*}
\begin{align*}
\begin{split}
&Q_{21}=e^{-sr_{12}}(a_{12}-a_{21})(a_{12}-a_{23})\frac{\zeta-a_{13}}{a_{12}-a_{13}}, \\
&Q_{22}=(\zeta-a_{21})(\zeta-a_{23}),\\
&Q_{23}=e^{-sr_{23}}(a_{32}-a_{21})(a_{32}-a_{23})\frac{\zeta}{a_{32}}\frac{\zeta-a_{31}}{a_{32}-a_{31}},
\end{split}
\end{align*}
\begin{align*}
\begin{split}
&Q_{31}=e^{-sr_{13}}(a_{13}-a_{31})(a_{13}-a_{32})\frac{\zeta-a_{12}}{a_{13}-a_{12}},\\
&Q_{32}=e^{-sr_{23}}(a_{23}-a_{31})(a_{23}-a_{32})\frac{\zeta-a_{21}}{a_{23}-a_{21}},\\
&Q_{33}=(\zeta-a_{31})(\zeta-a_{32}).
\end{split}
\end{align*}

We must normalize $Q$ so that each row has norm 1 with respect to the norm \eqref{eq:inner product polynomials}. The norm of the row $\psi_{j}(s,\zeta)=(Q_{1}^{(j)},Q_{j}^{(j)},Q_{j}^{(j)})$ for $1 \leq j \leq 3$, to the first order, is
\begin{align}
|| \psi_1 ||&= \frac{1}{\sqrt{\bar{z}_{12}a_{21}\bar{z}_{13}a_{31}}}, \
|| \psi_2||  =\frac{1}{\sqrt{-\bar{z}_{12}a_{12}\bar{z}_{23}a_{32}}}, \
|| \psi_3||= \frac{1}{\sqrt{\bar{z}_{13}a_{13}\bar{z}_{23}a_{23}}}.
\end{align}
(Note that $\bar{z}_{ji}a_{ij}=x_i-x_j+r_{ij}\ge0$ so these are indeed real.)
Using these the Lax pair $(L,M)$, evaluated at $\zeta=0$, obtained from the normalized $Q$ has the following form
\begin{align*}
L^N(s,0)=\begin{pmatrix} L_{11} & 0 & 0 \\ L_{21} & L_{22} & 0 \\ L_{31} & L_{32} & L_{33} \end{pmatrix},
\qquad
M(s,0)=\begin{pmatrix}M_{11} & 0 & 0 \\ M_{21} & M_{22} & 0 \\ M_{31} & M_{32} & M_{33} \end{pmatrix}. 
\end{align*}

Here
\begin{align*}
L_{11} &=z_1, \\
L_{21} &=e^{-sr_{12}}\frac{|z_{12}|\sqrt{\bar{z}_{13}a_{31}\bar{z}_{23}a_{32}}(a_{21}-a_{12})}{z_{12}} \times \\  
&\qquad\qquad \left(\frac{z_1a_{13}(a_{12}-a_{23})}{z_{13}(a_{12}-a_{13})}-\frac{z_2a_{23}a_{32}(a_{12}-a_{31})}{z_{23}a_{31}(a_{12}-a_{32})} \right), \\
L_{22}&= z_2, \\
L_{31}&=e^{-sr_{13}}\frac{|z_{13}|\sqrt{-\bar{z}_{12}a_{21}\bar{z}_{23}a_{23}}(a_{13}-a_{31})}{z_{13}} \times\\ &\qquad\qquad
\bigg(\frac{z_1a_{12}(a_{13}-a_{32})}{z_{12}(a_{12}-a_{13})}
-\frac{z_3a_{23}a_{32}(a_{13}-a_{21})}{z_{23}a_{21}(a_{13}-a_{23})}\bigg),
\\
L_{32}&=e^{-sr_{23}}\frac{|z_{23}|\sqrt{\bar{z}_{12}a_{12}\bar{z}_{13}a_{13}} (a_{23}-a_{32}) }{z_{23}}\times\\ &\qquad\qquad
\bigg(\frac{z_2a_{21}(a_{23}-a_{31})}{z_{12}(a_{23}-a_{21})}-\frac{z_3a_{13}a_{31}(a_{12}-a_{23})}{z_{13}a_{12}(a_{13}-a_{23})}\bigg), \\
L_{33} &=z_3.
\end{align*}
and
\begin{align*}
M_{11}&=x_1,\\
M_{21}&=e^{-sr_{12}}\frac{|z_{12}|\sqrt{\bar{z}_{13}a_{31}\bar{z}_{23}a_{32}}(a_{12}-a_{21})}{z_{12}} \times \\  
&\qquad\qquad \left(\frac{x_2a_{32}a_{23}(a_{12}-a_{31})}{z_{23}a_{31}(a_{12}-a_{32})}-\frac{(x_1+r_{12})a_{13}(a_{12}-a_{23})}{z_{13}(a_{12}-a_{13})}\right),
\\
M_{22}&=x_2,\\
M_{31}&=e^{-sr_{13}}\frac{|z_{13}|\sqrt{-\bar{z}_{12}a_{21}\bar{z}_{23}a_{23}}(a_{13}-a_{31})}{z_{13}} \times \\  
&\qquad\qquad \bigg( \frac{x_3a_{23}a_{32}(a_{13}-a_{21})}{z_{23}a_{21}(a_{13}-a_{23})}-\frac{(x_1+r_{13})a_{12}(a_{13}-a_{32})}{z_{12}(a_{12}-a_{13})} \bigg), 
\\
M_{32}& = e^{-sr_{23}} \frac{|z_{23}|\sqrt{\bar{z}_{12}a_{12}\bar{z}_{13}a_{13}}(a_{23}-a_{32})}{z_{23}}\times \\  
&\qquad\qquad  \left(\frac{x_3a_{13}a_{31}(a_{12}-a_{23})}{z_{13}a_{12}(a_{23}-a_{13})}-\frac{(x_2+r_{23})a_{21}(a_{23}-a_{31})}{z_{12}(a_{21}-a_{23})}\right),\\
M_{33}&=x_3.
\end{align*}

From (\ref{eq:NahmFromLax})  we obtain the following Nahm solutions:
\begin{align}
\begin{split}
T_1(s)&=\frac{i}{2}\begin{pmatrix} z_1+\bar{z}_1 & \bar{L}_{21} & \bar{L}_{31} \\ L_{21} & z_2 + \bar{z}_2 & \bar{L}_{32} \\ L_{31} & L_{32} & z_3+\bar{z}_3 
\end{pmatrix}+O\left(e^{-\alpha s}\right),\\
T_2(s)&=\frac{1}{2} \begin{pmatrix} z_1 - \bar{z}_1 & - \bar{L}_{21} & -\bar{L}_{31} \\ L_{21} & z_2-\bar{z}_2 & -\bar{L}_{32} \\ L_{31} & L_{32} & z_3-\bar{z}_3\end{pmatrix}+O\left(e^{-\alpha s}\right), \\
T_3(s)&=\frac{i}{2} \begin{pmatrix} 2x_1 & \bar{M}_{21} & \bar{M}_{31} \\ M_{21} & 2x_2 &\bar{M}_{32} \\ M_{31} & M_{32} &2x_3 \end{pmatrix}+O\left(e^{-\alpha s}\right), \\
T_0(s)&=\frac{1}{2} \begin{pmatrix}0 & -\bar{M}_{21} & -\bar{M}_{31} \\ M_{21} & 0 & -\bar{M}_{32} \\ M_{31} & M_{32} & 0 \end{pmatrix}+O\left(e^{-\alpha s}\right),
\end{split}
\end{align}
where $\alpha$ is the minimum of the values $2r_{12}$, $2r_{13}$, and $2r_{23}$.

\section{Dirac Zero Modes}
In this section we review how to obtain normalizable zero modes of the monopole Dirac operator from a basis of polynomial tuples satisfying \hyperlink{PolynomialConditions}{(A)}. The Dirac zero modes allows one to carry out the Nahm transform between monopoles and solutions to Nahm's equations,

\[
\begin{tikzcd}
\text{Monopole}  \arrow[leftrightarrow]{rr}{\text{Nahm Transform}} && [4em] \text{Nahm Solution}.
\end{tikzcd}
\]

For the bundle and self-dual connection $(E,A)$ over the reduced space $X=\mathbb{R}^4 / \Lambda$, we have Dirac operators $\slashed{\mathfrak{D}}_{x^\ast}$ for a family of connections of the trivial bundle $I$ twisted by $x^\ast \in X^\ast$. With respect to the chiral decomposition $S=S^+ \oplus S^-$, the Dirac operator $\slashed{\mathfrak{D}}_{x^\ast}:\Gamma(S \otimes E \otimes I) \to \Gamma(S \otimes E \otimes I)$ has the form
\begin{align*}
    \slashed{\mathfrak{D}}_{x^\ast}=\begin{pmatrix} 0 & \mathfrak{D}_{x^\ast} \\ \mathfrak{D}_{x^\ast}^\dagger & 0 \end{pmatrix},
\end{align*}
and we are interested in the normalizable zero modes of $\mathfrak{D}_{x^\ast}^\dagger$. The operator $\mathfrak{D}_{x^\ast}$ has no normalizable zero modes but, as we shall see following Nahm \cite{nahm_algebraic_1983}, it is an useful operator to consider.

\subsection{Commuting Operators}
Identifying $\mathbb{R}^4$ with $\mathbb{C}\sp2$ by setting $z=x_1+ix_2$ and $w=x_3+ix_4$. The anti-self-dual equation \eqref{eq:ASDequation} can be reformulated using a pair of operators $(\mathcal{L},\mathcal{M})$ with spectral parameter $\zeta \in \mathbb{P}^1$, where
\begin{align} \label{eq:commuting operators}
    \mathcal{L}=D_{\bar{z}}-D_w \zeta, \ \ \ \mathcal{M}=D_{\bar{w}}+D_z \zeta.
\end{align}
There is nothing special about our choice of complex structure; indeed the spectral parameter $\zeta$ parametrizes all complex structures on $\mathbb{R}^4$.
The anti-self-dual equation \eqref{eq:ASDequation} is then equivalent to $(\mathcal{L},\mathcal{M})$ commuting,
 $[\mathcal{L},\mathcal{M}]=0$, for all choices of $\zeta$.

The Dirac operator $\mathfrak{D}$, whose explicit formulation shall be given in the monopole and Nahm contexts later, is related to $(\mathcal{L},\mathcal{M})$ via
\begin{align} \label{eq:Dirac and ASD Operators Relation}
    \begin{pmatrix} 1 & 0 \\ -\zeta & 1 \end{pmatrix} \mathfrak{D} \begin{pmatrix} 1 \\ \zeta \end{pmatrix} = \begin{pmatrix} \mathcal{M} \\ \mathcal{L} \end{pmatrix}.
\end{align}
Thus, if $\chi$ is a solution to the associated linear problem
$\mathcal{L} \chi =0$ and $ \mathcal{M} \chi =0$,
then, since the matrix $\begin{pmatrix} 1 & 0 \\ -\zeta & 1 \end{pmatrix}$ is invertible, we have $\mathfrak{D} \begin{pmatrix} 1 \\ \zeta  \end{pmatrix} \otimes \chi=0.$

\subsection{Nahm-sided Zero Modes}
We use the orthonormal basis $U$ of $H^0(S,L^s(n-1))$ and take advantage of 
\eqref{eq:Dirac and ASD Operators Relation} to construct zero modes for the Nahm Dirac operators associated to a Nahm solution $(T_0,T_1,T_2,T_3)$. 
Here
$$\mathfrak{D}_{\vec{x}}^\dagger: L^2(S^- \otimes E \otimes I) \to H^{-1}(S^+ \otimes E \otimes I),\quad \mathfrak{D}_{\vec{x}}:H^1(S^+ \otimes E \otimes I) \to L^2(S^- \otimes E \otimes I),$$ where\footnote{
One often sees $\mathfrak{D}_{\vec{x}}=\Delta$,
$\mathfrak{D}_{\vec{x}}^\dagger=\Delta\sp\dagger$ and $\Delta w=0$, $\Delta\sp\dagger v=0$ in the literature.}
\begin{align*} \label{eq:TwistedNahmDiracOperator}
\mathfrak{D}_{\vec{x}}^\dagger = i\frac{d}{ds}+iT_0-\sum_{j=1}^3 \sigma_j \otimes (T_j-ix_j), \  \mathfrak{D}_{\vec{x}}=i\frac{d}{ds}+iT_0+\sum_{j=1}^3 \sigma_j \otimes (T_j - ix_j).
\end{align*}
(To obtain the untwisted Dirac operators simply set $(x_1,x_2,x_3)$ to zero.)

After reduction of $\mathbb{R}^4$ to $\mathbb{R}$, the pair of commuting operators $(\mathcal{L},\mathcal{M})$ of \eqref{eq:commuting operators} is related to the Lax pair $(L,M)$ by $\mathcal{M}=i\left[ \frac{d}{ds}+M^N\right]$ and $\mathcal{L}=i L^N$. Indeed, in the one-dimensional case, the Lax pair is simply the more useful reformulation of $(\mathcal{L},\mathcal{M})$. 
Thus, a solution $\chi$ to $L^N \chi=0$ and $(\frac{d}{ds}+M^N) \chi=0$ is also a solution to $\mathcal{L}\chi=0$ and $\mathcal{M} \chi=0$.

\begin{proposition}[\cite{nahm_algebraic_1983}]
Let the spectral curve $\mathcal{C}$ be given by $\overset{n}{\underset{j=1}{\prod}} (\eta-p_j(\zeta))=0$. For $\vec{x}=(x_1,x_2,x_3)$, let $a_{jx}$ and $a_{xj}$ be the two roots of $p_j(\zeta)-p_x(\zeta)$, where $p_x(\zeta):=x_1+ix_2-2x_3\zeta-(x_1-ix_2)\zeta^2$. 

A $2n \times 2n$ fundamental matrix $W$ of solutions to
\begin{align*}
    \mathfrak{D}_{\vec{x}} w=0
\end{align*}
is given by the collection of spinors 
\begin{align*}
     e^{sh^+_x(a_{jx})}\begin{pmatrix} 1 \\ a_{jx} \end{pmatrix} \otimes U_j^N(s,a_{jx}),  \ \ \ \ e^{sh^+_x(a_{xj})}\begin{pmatrix} 1 \\ a_{xj} \end{pmatrix} \otimes U_j^N(s,a_{xj}),
\end{align*}
for $j=1,2,\dots,n$. Here, $U^N(s,\zeta)$ is an orthonormal basis of $H^0(S,L^s(n-1))$ in North coordinates and  $h^+_x(\zeta)=x_3+(x_1-ix_2)\zeta$.
\end{proposition}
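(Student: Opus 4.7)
The strategy is to realize each proposed spinor as the image, under relation (\ref{eq:Dirac and ASD Operators Relation}), of a common solution to a suitably twisted pair of commuting operators, and then to verify linear independence by inspecting the asymptotic behaviour as $s \to \infty$.

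First, since $\mathfrak{D}_{\vec{x}}$ is obtained from $\mathfrak{D}$ by the substitution $T_j \mapsto T_j - ix_j$ (for $j=1,2,3$), applying the same substitution to (\ref{eq:NahmLaxN}) produces a twisted Lax pair
\begin{equation*}
L^N_{\vec{x}}(s,\zeta) = L^N(s,\zeta) - p_x(\zeta)\mathds{1}, \qquad M^N_{\vec{x}}(s,\zeta) = M^N(s,\zeta) - h_x^+(\zeta)\mathds{1}.
\end{equation*}
The corresponding commuting operators $\mathcal{L}_{\vec{x}} := iL^N_{\vec{x}}$ and $\mathcal{M}_{\vec{x}} := i\bigl(\tfrac{d}{ds} + M^N_{\vec{x}}\bigr)$ then satisfy the intertwiner (\ref{eq:Dirac and ASD Operators Relation}) with $\mathfrak{D}$ replaced by $\mathfrak{D}_{\vec{x}}$. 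Setting $\chi_j(s,\zeta) := e^{sh_x^+(\zeta)}\,U_j^N(s,\zeta)$ and invoking the Lax linear problem (\ref{eq:LaxLinearProblem}) gives
\begin{equation*}
(L^N - p_j(\zeta))\chi_j = 0, \qquad \bigl(\tfrac{d}{ds} + M^N - h_x^+(\zeta)\bigr)\chi_j = 0.
\end{equation*}
At either root $\zeta \in \{a_{jx}, a_{xj}\}$ of $p_j(\zeta)-p_x(\zeta)$ these read $\mathcal{L}_{\vec{x}}\chi_j = 0 = \mathcal{M}_{\vec{x}}\chi_j$, and the intertwiner then yields $\mathfrak{D}_{\vec{x}}\bigl((1,\zeta)^{T}\otimes \chi_j(s,\zeta)\bigr) = 0$, exhibiting the proposed collection as $2n$ elements of $\ker \mathfrak{D}_{\vec{x}}$.

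The main obstacle is linear independence. Since $\mathfrak{D}_{\vec{x}} w = 0$ is a first-order linear ODE in $s$ on the rank-$2n$ bundle $S^+\otimes E\otimes I$, its solution space has dimension $2n$, and it suffices to show the $2n \times 2n$ Wronskian $W(s)$ of the proposed spinors is nonsingular at a single value of $s$. I would take $s \to \infty$, where combining the asymptotic analysis of Section \ref{section:construction of polynomials} with the relation (\ref{eq:QandURelationship}) gives $U_j^N(s,\zeta) \sim A_j(\zeta)\,e^{-sh_j^+(\zeta)}\,e_j$. The two spinors indexed by $j$ then asymptote, up to the nonzero scalars $A_j(a_{jx})$, $A_j(a_{xj})$ and the never-vanishing exponential factors $e^{s(h_x^+(a)-h_j^+(a))}$, to the pair
\begin{equation*}
\begin{pmatrix} e_j \\ a_{jx}\,e_j \end{pmatrix}, \qquad \begin{pmatrix} e_j \\ a_{xj}\,e_j \end{pmatrix}.
\end{equation*}
After a row permutation grouping the $2n$ rows by the index $j$, the leading-order matrix becomes block-diagonal with $2\times 2$ blocks $\begin{pmatrix}1 & 1 \\ a_{jx} & a_{xj}\end{pmatrix}$, whose determinants $a_{xj} - a_{jx}$ are nonzero under the generic-configuration assumption made throughout. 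Hence $\det W(s) \neq 0$ for all sufficiently large $s$, and the $2n$ spinors form a fundamental matrix.
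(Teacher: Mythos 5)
Your argument is correct and is essentially the paper's own: twist the Lax pair by $\vec{x}$ so that $L^N_{\vec{x}}=L^N-p_x(\zeta)$, $M^N_{\vec{x}}=M^N-h_x^+(\zeta)$, observe that the prefactor $e^{sh_x^+(\zeta)}$ absorbs the shift in $M^N$ while the eigenvalue shifts to $p_j(\zeta)-p_x(\zeta)$, evaluate at its two roots $a_{jx},a_{xj}$, and apply the intertwining relation \eqref{eq:Dirac and ASD Operators Relation} to land in $\ker\mathfrak{D}_{\vec{x}}$. The only difference is your explicit verification that the $2n$ spinors are linearly independent, via the large-$s$ asymptotics $Q^N\to\diag(A_1,\dots,A_n)$ and the nonvanishing $2\times2$ block determinants $a_{xj}-a_{jx}$; the paper leaves this fundamental-matrix claim implicit, so your check is a harmless (and welcome) supplement rather than a different route.
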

\begin{proof}
The twist $\vec{x}$ shifts the Lax pair, $L_{\vec{x}}^N:=L^N-p_x(\zeta)$ and $M_{\vec{x}}^N:=M^N-h_x^+(\zeta)$. The shift in $M^N$ forces a prefactor of $e^{sh^+_x(\zeta)}$ for the basis $U^N$ and the eigenvalues of $L^N$ are shifted to $p_j(\zeta)-p_x(\zeta)$. Thus, we have
\begin{align*}
&\left[L_{\vec{x}}^N - \eta \right]  e^{sh^+_x(\zeta)} U_j^N(s,\zeta)=0, \\
&\left[ \frac{d}{ds}+M_{\vec{x}}^N \right] e^{sh^+_x(\zeta)} U_j^N(s,\zeta)=0.
\end{align*}
In order to apply \eqref{eq:Dirac and ASD Operators Relation}, we evaluate $\eta=p_j(\zeta)-p_x(\zeta)$ at its two roots $\zeta=a_{jx}$ and $\zeta=a_{xj}$.
\end{proof}

Let $W$ and $V$ be the fundamental matrices of $2n \times 2n$ solutions to $\mathfrak{D}_{\vec{x}} w=0$ and $\mathfrak{D}_{\vec{x}}^\dagger v=0$, respectively. We can choose $V$ to be $(W^\dagger)^{-1}$. To obtain the unique normalizable solution $v$, we must construct the projector of the fundamental matrix $V$ onto the $L\sp2$ kernel of $\mathfrak{D}_{\vec{x}}^\dagger$. This projector may be determined algebraically by cancellation of poles \cite[pp.4]{braden_construction_2018}.

\subsection{Monopole-sided Zero Modes}
We conclude by constructing the normalizable zero modes to the monopole Dirac operator $\mathfrak{D}_s^\dagger$ associated to the Dirac multi-monopole $(A,\Phi)$. We use an ansatz of Lamy-Poirier \cite{lamy-poirier_dirac_2015}, which builds on an observation of Nahm \cite{nahm_algebraic_1983} that we can go further with the solutions $\chi(\zeta)$ to $\mathcal{L} \chi(\zeta)=0$, $\mathcal{M} \chi(\zeta)=0$ to produce zero modes not only of $\mathfrak{D}$ but also of $\mathfrak{D}^\dagger$. 

First, the pair $(E,A)$ is anti-self-dual so that $\mathfrak{D}^\dagger \mathfrak{D}$ is a diagonal operator, i.e. $\mathfrak{D}^\dagger \mathfrak{D}=\mathds{1} \otimes \left(-\overset{4}{\underset{j=1}{\sum}} D_j^2\right)$. From \eqref{eq:Dirac and ASD Operators Relation}, it follows that $\mathfrak{D}^\dagger\left[ \mathfrak{D} \begin{pmatrix} 1 \\ \zeta \end{pmatrix} \otimes \chi(\zeta)\right]=0$. Since $\mathfrak{D}^\dagger \mathfrak{D}$ is diagonal, 
$$0=\mathfrak{D}^\dagger \mathfrak{D} \begin{pmatrix} 1 \\ \zeta \end{pmatrix} \otimes \chi(\zeta)=
\begin{pmatrix} 1 \\ \zeta \end{pmatrix} \otimes 
\left( -D_j^2 \chi(\zeta)\right),$$ therefore
$\chi(\zeta)$ is harmonic for all $\zeta$.
Moreover
\begin{align}
     \mathfrak{D} \begin{pmatrix} 1 \\ 0 \end{pmatrix} \otimes \chi(\zeta) .
\end{align}
solves the Dirac equation.
However, $\mathfrak{D} \begin{pmatrix} 1 \\ 0 \end{pmatrix} \otimes \chi(\zeta)$ might not be normalizable. To obtain a normalizable zero mode, one must do more. The Dirac operators are independent of $\zeta$, thus if $\chi$ is a zero mode then so is $F(\zeta)\chi$ for any function $F(\zeta)$. By linearity, $\underset{j}{\sum} F_j(\zeta) \chi_j(\zeta)$ is a zero mode. 

In the case of the Dirac multimonopole, Lamy-Poirier \cite{lamy-poirier_dirac_2015} shows that in order to ensure cancellation of poles and obtain a normalizable zero mode, one may specifically set $F_j(s,\zeta)={Q_j(s,\zeta)}/{\zeta}$ for a $n$-tuple $\left(Q_1(s,\zeta),Q_2(s,\zeta),\dots,Q_n(s,\zeta)\right)$ of polynomials satisfying the matching conditions $Q_i(a_{ij})=e^{-sr_{ij}}Q_j(a_{ij})$ and then take their residues.

We explicitly write the formulations. Now
$$\mathfrak{D}_s^\dagger: H^{-1}(S^- \otimes E \otimes L) \to L^2(S^+ \otimes E \otimes L),\quad \mathfrak{D}_s:L^2(S^+ \otimes E \otimes L) \to H^1(S^- \otimes E \otimes L),$$ where
\begin{align} \label{eq:TwistedDiracOperator}
\mathfrak{D}_s^\dagger = -\sum_{j=1}^3 \sigma_j \otimes D_j - i\Phi+s, \ \ \ \mathfrak{D}_s=\sum_{j=1}^3 \sigma_j \otimes D_j - i\Phi+s.
\end{align}
(To obtain the untwisted Dirac operators simply set $s$ to zero.)

For the multimonopole configuration of $n$ point monopoles of unit charge at distinct locations $\vec{c}_k, \ k=1,\dots,n$ in $\mathbb{R}^3$ the pair $(A,\Phi)$ can be written as \cite{cheng_fermion_2013}
\begin{align} \label{eq:MonopoleSolutions}
\Phi(x)=\sum_{1}^n \frac{i}{2r_k}, \ \ \  A(x)= \sum_{k=1}^n \frac{z_k d\bar{z_k}-\bar{z_k}dz_k}{4r_k(r_k+x_k)},
\end{align}
where for the vector $\vec{x}_k=\vec{x}-\vec{c}_k$ we set $r_k:=|\vec{x}_k|$, $z_k:=x_k^1+i x_k^2$, and $x_k:=x_k^3$. 
To avoid pathologies in the gauge, we rotate, if necessary, the monopole configuration in $\mathbb{R}^3$ so that no two monopoles are separated by a translation in the $x^3$ direction.

It is not difficult to calculate $\chi$ solving $\mathcal{L} \chi=0$ and $\mathcal{M} \chi=0$ since $(A,\Phi)$ is a superposition of $n$ point monopoles, which gives that $\chi$ is a product of its single point constituents. Let $a_{xk}$ be the root of $p_k(\zeta)-p_x(\zeta)$ corresponding to the direction from $\vec{x}$ to the monopole located at position $\vec{a}_k$. One obtains $\chi=\overset{n}{\underset{k=1}{\prod}}\sqrt{\frac{-a_{xk}}{\bar{z}_k}} \frac{1}{\zeta-a_{xk}}$. The twist by $s$ gives an additional prefactor of $e^{-sh_{xk}^-(\zeta)}$ to each constituent.  

\begin{proposition}[{\cite[Eqn (5.6)]{lamy-poirier_dirac_2015}}]
Let $\left(Q_1(\zeta),Q_2(\zeta,\dots,Q_n(\zeta)\right)$ be a tuple of polynomials satisfying the matching conditions $Q_i(a_{ij})=e^{-sr_{ij}}Q_j(a_{ij})$ constructed above. Let $a_{xj}$ be the root of $p_j(\zeta)-p_x(\zeta)$ that points from $\vec{x}$ to $\vec{a}_j$ for $p_x(\zeta):=x_1+ix_2-2x_3\zeta-(x_1-ix_2)\zeta^2$. Then
\begin{align*}
    \mathfrak{D}_s \begin{pmatrix}1 \\ 0 \end{pmatrix} \overset{n}{\underset{i=1}{\sum}}  \frac{1}{2\pi i} \oint_{a_{xi}} 
  \frac{Q_i(\zeta)}{\zeta} e^{-s(x_i-\frac{z_i}{\zeta})} \left(\overset{n}{\underset{k=1}{\prod}} \sqrt{\frac{-a_{xk}}{\bar{z}_k}} \frac{1}{\zeta-a_{xk}}\right)\,  d\zeta
\end{align*}
is a normalizable zero mode of $\mathfrak{D}^\dagger_s$.
\end{proposition}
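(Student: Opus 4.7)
The strategy follows the two-step program sketched in the paragraph preceding the statement. First one verifies that the expression lies formally in $\ker\mathfrak{D}^\dagger_s$, then one checks $L^2$ integrability on $\mathbb{R}^3$.

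For the first step, set
$$\chi(s,\vec x,\zeta) = \prod_{k=1}^n \sqrt{-a_{xk}/\bar z_k}\,(\zeta-a_{xk})^{-1}\,e^{-s h_{xk}^-(\zeta)}.$$
One checks directly, as in the paragraph before the proposition, that $\chi$ satisfies $\mathcal{L}\chi = 0 = \mathcal{M}\chi$ for the $s$-twisted multimonopole $(A,\Phi)$; this reduces to the single point-monopole case and multiplicativity, because $(A,\Phi)$ is a superposition. The relation \eqref{eq:Dirac and ASD Operators Relation} then gives $\mathfrak{D}_s\bigl(\begin{smallmatrix}1\\ \zeta\end{smallmatrix}\bigr)\otimes\chi(\zeta)=0$ for every $\zeta$. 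Since $F_A$ is anti-self-dual, $\mathfrak{D}^\dagger_s\mathfrak{D}_s=\mathds{1}\otimes(-\sum D_j^2+\text{potential})$ is diagonal, so $\chi(\zeta)$ is harmonic in $\vec x$ for every $\zeta$, and consequently $\mathfrak{D}^\dagger_s\mathfrak{D}_s\bigl(\begin{smallmatrix}1\\ 0\end{smallmatrix}\bigr)\otimes\chi(\zeta)=0$. Linearity in $\chi$ extends this to any finite $\mathbb{C}$-combination of $\chi(\zeta_j)$'s, and by Cauchy's theorem to the contour-integral construction of the proposition: the stated expression is a contour integral against the meromorphic kernel $Q_i(\zeta)\zeta^{-1}e^{-s(x_i-z_i/\zeta)}$, so it formally lies in $\ker\mathfrak{D}^\dagger_s$.

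The second step is the heart of the argument. Computing the residue at $\zeta=a_{xi}$, the $i$-th term contributes (up to the common prefactor $\prod_k\sqrt{-a_{xk}/\bar z_k}$)
$$\mathcal{R}_i\;=\;\frac{Q_i(a_{xi})\,e^{-s(x_i-z_i/a_{xi})}}{a_{xi}\prod_{k\ne i}(a_{xi}-a_{xk})}.$$
Potential pathologies arise in three places: (i) as $\vec x\to\vec c_j$, where $a_{xj}\to 0$ and $a_{xk}\to a_{kj}$ for $k\ne j$; (ii) as $|\vec x|\to\infty$; (iii) at $\vec x$ for which $a_{xi}=a_{xk}$ for some $i\ne k$, making the denominator of $\mathcal{R}_i+\mathcal{R}_k$ apparently singular. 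Situation (ii) is handled directly: since $a_{xi}$ is the stereographic image of the unit vector from $\vec x$ to $\vec c_i$, the exponent $s(x_i-z_i/a_{xi})$ grows linearly in $|\vec x|$, yielding exponential decay sufficient for $L^2$-integrability on $\mathbb{R}^3$. Situation (i) is tamed by the factor $\sqrt{-a_{xj}/\bar z_j}\to 0$, which compensates the Coulomb blow-up of the multimonopole gauge. The key situation is (iii): here one applies the identity $Q_i(a_{ik})=e^{-sr_{ik}}Q_k(a_{ik})$ together with the identity (proved analogously to the Lagrange interpolation identity \eqref{laginterp} used in Lemmas~\ref{lemma: L quadratic} and~\ref{lemma: M linear})
$$\frac{1}{a_{xi}\prod_{l\ne i}(a_{xi}-a_{xl})}+\frac{1}{a_{xk}\prod_{l\ne k}(a_{xk}-a_{xl})}=\text{regular at }a_{xi}=a_{xk},$$
so the numerators become equal at the collision thanks to the matching condition, and the sum $\mathcal{R}_i+\mathcal{R}_k$ is regular.

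The principal obstacle is verifying this pole-cancellation in case (iii) rigorously, since the collision locus is a real-codimension-two subvariety of $\vec x$-space parametrizing where the family of spectral curves $\mathcal{C}_{\vec x}=\{\prod_j(\eta-p_j(\zeta)+p_x(\zeta))=0\}$ acquires an additional double point, and one must show that the matching conditions \eqref{eq:PolynomialRowsBundle} used in the construction of $(Q_1,\dots,Q_n)$ encode precisely the data needed for cancellation there. Once this algebraic identity is established, applying $\mathfrak{D}_s\bigl(\begin{smallmatrix}1\\ 0\end{smallmatrix}\bigr)$ (a first-order operator) to a bounded, exponentially decaying expression preserves both boundedness and decay, giving the required $L^2$ normalizability.
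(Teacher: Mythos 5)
Your Step 1 reproduces the motivating discussion that the paper itself gives just before the statement (the harmonicity trick: $\mathcal{L}\chi=\mathcal{M}\chi=0$, anti-self-duality makes $\mathfrak{D}_s^\dagger\mathfrak{D}_s$ diagonal, hence $\mathfrak{D}_s\left(\begin{smallmatrix}1\\0\end{smallmatrix}\right)\otimes\chi$ lies in $\ker\mathfrak{D}_s^\dagger$); note the paper does not actually prove the proposition but quotes it from Lamy-Poirier, so everything beyond this motivation is on you. Even in Step 1 there is a mismatch you never reconcile: your $\chi$ carries the full twist $\prod_k e^{-sh^-_{xk}(\zeta)}$, whereas the integrand in the statement carries only the single factor $e^{-s(x_i-z_i/\zeta)}=e^{-sh^-_{xi}(\zeta)}$ in the $i$-th term. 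The omitted factors $e^{-sh^-_{xk}(\zeta)}$, $k\neq i$, depend on $\vec{x}$ and cannot be absorbed into a $\zeta$-dependent weight $F_i(\zeta)$, so "linearity plus Cauchy" does not transfer your kernel argument to the formula as stated; some accounting for these factors (gauge/trivialization conventions) is required.

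The genuine gap is in Step 2, which is precisely the content the paper defers to the reference. Your item (i) is wrong as stated: as $\vec{x}\to\vec{c}_j$ one has $|\sqrt{-a_{xj}/\bar z_j}|=(r_j-x_j)^{-1/2}\to\infty$ (and $a_{xj}$ has no limit, it depends on the direction of approach), so this factor does not "compensate" anything; local square-integrability near the monopole points, and across the Dirac strings $\{z_k=0,\ x_k<0\}$ of the gauge \eqref{eq:MonopoleSolutions} (where $a_{xk}=0$ collides with the $1/\zeta$ pole and the essential singularity of the exponential), requires estimates you do not supply — this is exactly why the paper rotates the configuration to avoid gauge pathologies. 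Your item (iii) is asserted rather than proved: the cancellation needs more than the matching condition. One must use that on the collision ray the common value of $a_{xi}=a_{xk}$ is precisely the spectral double point $a_{ik}$, that $x_i-z_i/a_{xi}=r_i$ (so each residue carries $e^{-sr_i}$), and that $r_k-r_i=r_{ik}$ there, whence $e^{-sr_k}Q_k(a_{ik})=e^{-sr_i}e^{-sr_{ik}}Q_k(a_{ik})=e^{-sr_i}Q_i(a_{ik})$ and the numerators agree; without these geometric identities your claim that "the numerators become equal thanks to the matching condition" has no justification, and your closing sentence concedes the identity is left unestablished. As written, the proposal is the paper's motivation plus a plan; the normalizability argument, which is the substance of the proposition, is missing.
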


\bibliographystyle{alpha}
 \bibliography{./main}
\end{document}